\documentclass[a4paper,12pt]{article}
\usepackage{amsmath}
\usepackage{graphicx}
\usepackage{rotating}
\usepackage{wasysym}
\DeclareGraphicsExtensions{.pdf,.png,.jpg}
\usepackage{chicago}
\usepackage{klett}
\usepackage{def_the}

\usepackage{Fig_size_A}
\usepackage{kfig}

\renewcommand{\theequation}{\mbox{\arabic{section}.\arabic{equation}}}
\renewcommand{\thefigure}{\arabic{section}.\arabic{figure}}
\renewcommand{\thetable}{\arabic{section}.\arabic{table}}
\renewcommand{\footnoterule}{\rule{14.8cm}{0.3mm}\vspace{+1.0mm}}
\renewcommand{\baselinestretch}{1.0}
\newtheorem{corollary}[theorem]{Corollary}

\title{}
\author{Eckhard Platen}
\begin{document}
\thispagestyle{empty} \vspace*{1.0cm}

\begin{center}
{\LARGE\bf Information-Theoretic Approach to  \\\vspace{0.5cm} Financial Market Modelling  } 
\end{center}

\vspace*{.5cm}
\begin{center}

{\large \renewcommand{\thefootnote}{\arabic{footnote}} {\bf Eckhard
Platen}\footnote{University of Technology Sydney,
  School of Mathematical and Physical Sciences, and \\Finance Discipline Group}$^{,}$
}
\vspace*{2.5cm}

\today

\end{center}

\begin{minipage}[t]{13cm} The paper treats the financial market as a communication system, using four information-theoretic assumptions to derive an idealized model with only one parameter. State variables are scalar stationary diffusions. The model minimizes the surprisal of the market and the Kullback-Leibler divergence between the benchmark-neutral pricing measure and the real-world probability measure. The state variables, their sums, and the growth optimal portfolio of the stocks evolve as squared radial Ornstein-Uhlenbeck processes in respective activity times.

\end{minipage}
\vspace*{0.5cm}

{\em JEL Classification:\/} G10, G11

\vspace*{0.5cm}
{\em Mathematics Subject Classification:\/} 62P05, 60G35, 62P20
\vspace*{0.5cm}\\
\noindent{\em Key words and phrases:\/}   growth optimal portfolio,  communication system, surprisal, information minimization, squared radial Ornstein-Uhlenbeck process. 
\vspace*{0.5cm}\\
\noindent{\em Acknowledgements:\/} The author would like to express his gratitude for
\noindent receiving valuable suggestions on the  paper by Yacine Ait-Sahalia, Mark Craddock, Kevin Fergusson, Len Garces, Martino Grasselli, Matheus Grasselli, Juri Hinz, Hardy Hulley, Sebastien Lleo, Erik Schloegl, Thorsten Schmidt, Michael Schmutz, Martin Schweizer,  Stefan Tappe, and three referees.

	\newpage
	\section{Introduction}\label{section.intro}

  The Fundamental Theorem of Asset Pricing, derived by \citeN{DelbaenSc98}, forms the basis of modern financial mathematics, as presented, e.g., by \citeN{Jarrow22}. It states that the widely applied method of risk-neutral pricing is equivalent to the absence of free lunches with vanishing risk (FLVRs). \citeN{PlatenFe25a} showed that the absence of FLVRs in real markets is not supported at a high significance level. This prompts reconsideration of financial market models to allow FLVRs and provide a sound, practical pricing method.\\
  
  The benchmark approach provides a modelling framework where FLVRs can exist; see, e.g., \citeN{Platen06ba} and \citeN{PlatenHe06}. Benchmark-neutral pricing, introduced in \citeN{Platen25a}, uses the stocks' growth optimal portfolio (GOP) as num\'eraire and offers a practical, theory-based pricing approach. Developing related quantitative methods requires a financial market model that captures empirical facts and enables precise contingent claim hedging.\\
   
   The following key stylized empirical facts should be incorporated into any model aiming to accurately reflect financial markets:\\
   
  \noindent 1.	Stock market index volatilities tend to follow processes with stationary probability densities, as demonstrated in \citeN{Engle82}.\\
  
   \noindent 2.	The log-return distributions of stock market indices closely approximate a Student-t distribution with four degrees of freedom across various statistical methodologies, as shown in \citeN{MarkowitzUs96a}, \citeN{MarkowitzUs96b}, \citeN{HurstPl97d}, and \citeN{FergussonPl06dc}.\\ \\
   3.	Stock market indices display the leverage effect: declines in the index are associated with increased volatility, and vice versa; see \citeN{Black76b}, \citeN{AitSahalia12}, and \citeN{ElEuchFuRo18}.\\ \\
   4.	Volatility exhibits clustering and rough sample paths, including occasional spikes that differ from typical diffusion processes over calendar time, as discussed by \citeN{BayerFrGa16}, and \citeN{ElEuchFuRo18}.\\ \\
   5.	Volatility comprises both rapidly and slowly changing components; see \citeN{FouquePaSi99}.\\ \\
   6.	Log-returns of major indices exhibit scaling properties and self-similarity, as evidenced by \citeN{Mandelbrot01a} and \citeN{BreymannLuPl09e}.\\ \\
   7.	Over extended periods, the logarithm of world stock market indices demonstrates approximately linear growth with mean-reverting characteristics, exemplified by the logarithm of the MSCI World total return stock index (MSCI) shown in Figure \ref{FigMSCIMF1}.\\ \\
      A \textquoteleft realistic' financial market model, as referenced in this paper, should provide robust explanations for these observed empirical phenomena.\\

  This paper explores a novel interpretation of the financial market, viewing it as a {\em communication system}. By integrating information-theoretical principles in the modelling, an idealized market model is developed based on four fundamental assumptions. These assumptions serve as the foundation for a rigorous and structured approach to understanding market dynamics. To ensure analytical tractability, the state variables within the market are represented as scalar stationary diffusions. This choice enables a manageable yet robust framework for analyzing the evolution and behaviour of these variables over time. \\
  
  	The existence of the {\em growth optimal portfolio} (GOP) is assumed, providing a benchmark for portfolio performance. The   GOP of a market  is   interchangeably  called the Kelly portfolio, expected logarithmic utility-maximizing portfolio, or num\'eraire portfolio; 
  	see, e.g.,  \citeN{Kelly56}, \citeN{Merton71}, \citeN{Long90}, \citeN{Becherer01}, \citeN{Platen06ba}, \citeN{KaratzasKa07}, \citeN{HulleySc10}, and \citeN{MacLeanThZi11}.  \citeN{Kelly56} identified the GOP in gambling using Shannon's information theory, marking the discovery of the GOP. \\
  	
  		The expected information content, specifically the surprisal of the stationary densities of the state variables, is minimized. This approach seeks the most efficient encoding of market information with the release of the minimal possible expected information content through traded prices.\\
 The benchmark-neutral pricing measure is in \citeN{Platen25a} proposed to replace the putative risk-neutral pricing measure. The Kullback-Leibler divergence between the benchmark-neutral pricing measure and the real-world probability measure is also minimized. This ensures that the pricing density remains as close as possible to observed market probabilities, maintaining both realism and theoretical soundness when pricing contingent claims. \\
  
  The framework that emerges from these information-minimization criteria yields a distinct idealized financial market model, which is called {\em minimal market model} (MMM). It extends the similarly named model proposed in \citeN{Platen01a} for the dynamics of a GOP. The paper will show, under the MMM, the state variables, their aggregated sums, and the GOP of the stocks all evolve according to squared radial Ornstein-Uhlenbeck (SROU) processes as studied in \citeN{GoingYo03}, each within their respective activity times. This result encapsulates the dynamic behaviour of both individual and collective market components within an information-theoretic modelling context.\\
  
 The properties of the proposed model will be derived endogenously. The model will be parsimonious with only one parameter  and a minimum number of driving Brownian motions.  It will facilitate the design of extremely accurate  hedging methods. The modelling of the complex stochastic dynamical system represented by the financial market is significantly simplified by modelling its state variables, the {\em normalized factors},  as independent scalar diffusion processes with stationary probability densities.\\ Since stocks are securities that are driven by several sources of randomness, it is not ideal to employ these as the  stochastic basis of a financial market model. Instead, the paper models stocks as portfolios of independent auxiliary securities, the {\em factors}, which   form the theoretical stochastic basis of the market. A factor is assumed to represent a portfolio of stocks. It is modelled as the product of an exponential function of time and an  independent state variable, the {\em normalized factor}.  By construction, the volatilities of the normalized factors are stationary processes and equal to those of the respective factors.  All securities are denominated in units of the savings account, which is added as a security to the factors to form the {\em extended  market} so that the interest rate does not need to be modelled.

		 The  extended market is assumed to permit continuous trading,   instantaneous investing and borrowing,   short sales with full use of proceeds,    infinitely divisible securities, and      no transaction costs. 
	 For the extended market, the modelling challenge  is reduced to the question,  which independent stationary scalar diffusions  should model the normalized factors. To answer  this question,  
	 the  paper optimizes 
	 suitable Lagrangians that capture mathematically the  principle driving the market's dynamics, which is the minimization of expected information content, the, so  called,  {\em surprisal}. 
	  To capture this principle, the 
	    paper interprets the market as a  communication system in the sense of \citeN{Shannon48}, where the expected information content is minimized.
	    This  identifies the optimal stationary probability densities of the normalized factors. \\

	    The fact that information-theoretical concepts are particularly pertinent to the field of finance can be explained as follows: The trades reveal with their prices information. The average speed of the information flow is measured by the average trading intensity, which the paper models by the {\em market activity}. The integrated market activity is called the {\em market time}. At certain periods, the market moves rather fast and at other periods more slowly. One notices that the market activity represents a  fast moving process because particular information can trigger enormous trading activity in a very short time.  Important is the fact that the market participants observe the market evolution in calendar time. When observing the volatility of a stock market index  in calendar time during a period of high market activity, one     observes   volatility spikes, which a diffusion that is evolving in calendar time cannot generate. The paper assumes  that the market is moving in a common  market time that moves faster at major market movements. However, the state variables, which are the normalized factors,  move in market time as scalar diffusions. This separates the modelling of the fast moving market activity from that of the  in market time as diffusions evolving  normalized factors. The market activity or speed of the information flow can be measured by the average trading intensity. The modelling of the market activity is beyond the scope of the current paper and suggested to be modelled similarly as proposed in \citeN{PlatenRe19}. This leads  to the first fundamental assumption of the  paper:\\
	 
	  \noindent {\em \bf A1}:
	 {\em The  normalized factors evolve as independent, stationary scalar diffusions  in  market time, which is the integrated  average trading intensity. }\\

	  The existence of the GOP can be interpreted as a  {\em no-arbitrage condition}   because \citeN{KaratzasKa07} 
	 have shown that  the existence of the  GOP is equivalent to their {\em No Unbounded Profit with Bounded Risk} (NUPBR)  condition. This no-arbitrage condition is weaker than the no {\em  Free Lunch with Vanishing Risk} (FLVR)  condition of \citeN{DelbaenSc98}. The assumption about the existence of the   GOP   is an extremely weak condition. When violated, the     respective growth rate maximization would have no solution and the candidate for the GOP would reach infinite values in finite time, which is not a financial market that the current paper aims to model.
	  This leads  to the  second fundamental assumption of the  paper:\\
	 
	 \noindent {\em \bf A2}:
	{\em The market's GOP and a
		 unique, strong solution of the system of SDEs, characterizing the market's normalized factor dynamics,    exist.}\\

By maximizing the instantaneous growth rate of portfolios of factors when forming the GOP of factors, which is called the {\em benchmark}, a first Lagrangian comes into play. The Assumption {\bf{A2}} requires the GOP of the entire market to exist, which is called {\em num\'eraire portfolio} (NP) and was named by \citeN{Long90}. By maximizing the growth rate of all admissible portfolios in the market, a second Lagrangian emerges in the modelling. \\ 

The trades disclose  price information. The stationary joint probability density of the normalized factors quantifies the probabilistic nature of the prices.   When the surprisal, as defined in \citeN{Kullback59},  is minimized, the probability densities are so that all price-relevant expected information content that could be removed by the buyer and seller in a trade has been removed  under the given constraints.  This leads to the third Lagrangian  and the third fundamental assumption of the  paper: \\ 
	  
	  \noindent {\em \bf A3}:
	  {\em The surprisal, which is the expected information content of the stationary joint probability density of the normalized factors,  is minimized.}\\

	      The paper calls an extended market  a {\em  surprisal-minimized} market when the above three  assumptions are satisfied.   	       
	         As an alternative and practicable pricing method to risk-neutral pricing,  {\em benchmark-neutral (BN) pricing} has been proposed  in \citeN{Platen25a}, which employs as its num\'eraire the benchmark that equals the GOP of the factors.     Buyers and  sellers of contingent claims have  an  interest to minimize the increase of the expected information content caused by  the pricing  density they use. 
	          The derivative of the {\em Kullback-Leibler divergence},  as defined in  \citeN{Kullback59}, of the   BN pricing measure from the real-world probability measure quantifies the increase of expected information content as the result of BN pricing. Under the first three assumptions, the latter will be shown to be the only practicable pricing method that is applicable for all admissible contingent claims. This  identifies  the fourth Lagrangian   and leads to the formulation of the  fourth fundamental assumption of the  paper: \\
	       
	       \noindent {\em \bf A4}:
	       {\em The Kullback-Leibler divergence of the benchmark-neutral pricing measure   from the  real-world probability measure is minimized.}\\
	       
	        We call an extended market   an {\em  information-minimized} market when the four assumptions are satisfied by its dynamics. 
	      The  paper will show that  the above four assumptions lead to an idealized financial market model that is \textquoteleft realistic' in the sense that it generates the empirical facts previously listed.  As indicated earlier, the derived idealized information-minimized financial market model is called minimal market model (MMM).  The MMM has  only one constant  parameter when viewed in market time, which is the {\em net risk-adjusted return}. This parameter is not  needed when applying benchmark-neutral pricing under the MMM.  The observable market time is the only input that is required for the pricing and hedging of a wide range of contingent claims expressed in terms of the market time.\\  
	      Under   the BN pricing measure, the  factors, sums of factors and the benchmark  follow  squared Bessel processes that evolve in respective observable activity times. Squared Bessel processes are special squared radial Ornstein-Uhlenbeck (SROU) processes, which are studied, e.g., in \citeN{RevuzYo99} and \citeN{GoingYo03}. They are generalizations of the CIR process, which was popularized in finance in   \citeN{CoxInRo85}. These processes are highly tractable with many explicit formulas for their functionals; see, e.g.,  \citeN{BaldeauxPl13}.  
	    	\\

	     The paper is organized as follows:  Section 2  introduces  the  extended market of factors. Section 3 minimizes the surprisal of the stationary densities of the normalized factors.  Section 4 minimizes the Kullback-Leibler divergence of the BN pricing measure from the real-world probability measure, which  reveals the  MMM. 
	      Five appendices prove the  theorems of the paper.

	            \section{ 
	         	Market of Factors}\setcA \setcB \label{Section3}
	        
	         \subsection{Factors}
	         
	         Under Assumption {\bf{A1}}, the nondecreasing {\em market time} $\tau_t$ evolves    with respect to the calendar time  $t\in[t_0,\infty)$. 
	        The  modelling is performed on a filtered probability space $(\Omega,\mathcal{F},\underline{\cal{F}},P)$ in market time, satisfying the usual conditions; see, e.g., 
	         \citeN{KaratzasSh98}.  The filtration $\underline{\cal{F}}$ $=(\mathcal{F}_\tau)_{\tau \in [\tau_{t_0},\infty)}$ models the evolution of all events relevant to the  modelling.\\
	   	        The securities are denominated in units of the savings account $S^0_\tau\equiv1$ and specify  $n\in\{1,2,...\}$ stocks $A^1_\tau,...,A^n_\tau$  as the risky primary security accounts. The stocks   reinvest all dividends or other payments and expenses. The $n$ stocks are assumed to represent self-financing portfolios of  the
	        $n$ independent {\em  factors}  $S^1_\tau,...,S^n_\tau$ at the market time $\tau\in[\tau_{t_0},\infty)$.  The $k$-th factor $S^k_\tau$, $k\in\{1,...,n\}$,   is assumed to be  driven by the $k$-th Brownian motion $W^k_\tau$.  The  $n$  independent driving Brownian motions $W^1_\tau,...,W^n_\tau$  evolve in  market time $\tau\in[\tau_{t_0},\infty)$ under the real-world probability measure $P$, where the filtration generated by the Brownian motions is a subfiltration of the given filtration $\underline{\cal{F}}$.\\
	        
	        The factors are independent, nonnegative auxiliary securities that form the stochastic basis of the market. We assume that each factor can be formed as a self-financing portfolio of the stocks, that is, we have the stochastic differential equation (SDE)
	        \begin{equation}
	        	\frac{	dS^j_\tau}{S^j_\tau}=\sum_{k=1}^{n}p^{k,j}_\tau \frac{	dA^k_\tau}{A^k_\tau}
	        \end{equation}
	        with initial value $S^j_{\tau_{t_0}}>0$, where the weights $p^{j,1}_\tau,...,p^{j,n}_\tau$ form predictable, square integrable processes for $\tau\in[\tau_{t_0},\infty)$  and  $j\in\{1,...,n\}$.\\
	        
	        Under Assumption {\bf{A2}}, the  market formed by the $n$ factors has a {\em growth optimal portfolio} (GOP), which we call the {\em benchmark} and denote  by $S^*_\tau$ at the market time $\tau$. Since the set of portfolios that can be formed by the factors and those that can be formed by the stocks are assumed to be the same, it follows by Theorem 3.1 in \citeN{FilipovicPl09} that the benchmark is  also the GOP of the market formed by the stocks. The market formed by the factors has, by construction, no locally riskfree portfolio (LRP), which is a portfolio with zero volatility. \\
	         Theorem 3.1 in \citeN{FilipovicPl09} reveals the general structure of a financial market with continuous securities that has a GOP and no LRP, which determines the structure of the market of factors.
	        The most striking structural property of this market is the existence of a unique   {\em net risk-adjusted return} $\hat\lambda_\tau$, which emerges as the Lagrange multiplier of the growth rate maximization that identifies  the benchmark $S^*_\tau$. 
	        The net risk-adjusted return $\hat\lambda_\tau$ is assumed to represent an   adapted,  square integrable process.  \\
	        We assume for  $k\in\{1,...,n\}$ the   {\em $k$-th  factor process} 
	        $S^k=\{ S^k_\tau,\tau\in[\tau_{t_0},\infty) \}$  to satisfy  the SDE \begin{equation}\label{hatB11j21}
	        	\frac{d   S^k_\tau}{ S^k_\tau}
	        	=\hat\lambda_\tau d\tau+\beta^k_\tau(\beta^k_\tau\omega^k d\tau+dW^k_\tau) 
	        \end{equation}
	        for $\tau\in[\tau_{t_0},\infty)$ with  strictly positive $\mathcal{F}_{\tau_{t_0}}$-measurable, random {\em initial $k$-th factor value} $ S^k_{\tau_{t_0}}>0$. The value of a factor  may approach zero, at which instance it is assumed to be instantaneously reflected.  For each independent driving Brownian motion $W^k_\tau$, $k\in\{1,...,n\}$, the respective {\em $k$-th factor volatility }   is denoted by $\beta^k_\tau$ and the respective {\em risk premium parameter} by $\omega^k>0$.  The volatility with respect to market time is  assumed to represent a flexible, continuous, strictly positive,  adapted, square integrable   process in market time. The expected rate of return $\hat\lambda_\tau+(\beta^k_\tau)^2\omega^k$ is  flexible because it is formed by the sum of the net risk-adjusted return and the product of the flexible $k$-th risk premium parameter  and the squared $k$-th factor volatility. 
	        \\
	      
	         \subsection{Benchmark}

	         Let us denote by $\beta_\tau$ the diagonal matrix with the factor volatilities at its diagonal and zeros at all off-diagonal elements.  Furthermore, we denote by ${\bf \omega}=(\omega^1,...,\omega^n)^\top$ the vector of the risk premium parameters. This allows us to write the SDE for the vector of  factors ${\bf{S}}_\tau=(S^1_\tau,...,S^n_\tau)^\top$ in the form \begin{equation} \label{e.2.14}
	         	\frac{d{\bf{S}}_\tau}{{\bf{S}}_\tau}=\hat\lambda_\tau{\bf{1}} d\tau +\beta_\tau(\beta_\tau\omega d\tau+
	         	d{\bf{W}}_\tau)
	         \end{equation}
	         with the vector of strictly positive {\em initial values}  ${\bf{S}}_{\tau_{t_0}}$ and  the vector process ${\bf{W}}=\{ {\bf{W}}_\tau =(W^1_\tau,\dots,W^{n}_\tau)^\top,\tau \in [\tau_{t_0},\infty)\}$ of the $n$   independent  driving  Brownian motions. 
	         Here we write $\frac{d{\bf{S}}_\tau}{{\bf{S}}_\tau}$ for the $n$-vector of stochastic differentials $(\frac{dS^1_\tau}{S^1_\tau},...,\frac{dS^{n}_\tau}{S^{n}_\tau})^\top$.
	         By application of the It\^o formula one obtains for a self-financing portfolio  $S^{\pi}_\tau$ of factors with  weight vector   ${  \pi^{}_\tau}=(\pi^{1}_\tau,..., \pi^{n}_\tau)^\top$ the SDE
	         \begin{equation} \label{e.2.10}
	         	\frac{dS^{\pi}_\tau}{S^{\pi}_\tau}= \pi_\tau^\top\frac{d{\bf{S}}_\tau}{{\bf{S}}_\tau}
	         \end{equation}
	         and the {\em instantaneous growth rate}
	         \begin{equation}
	         	g^{\pi}_\tau=\hat\lambda_\tau+({\pi}^{}_\tau)^\top \beta_\tau\beta_\tau\left(\omega
	         	-\frac{1}{2}\pi^{}_\tau\right)
	         \end{equation}
	         as the drift of the SDE of its logarithm  for $\tau \in [\tau_{t_0},\infty)$. The benchmark $S^*_\tau$ is the portfolio that maximizes this growth rate, which involves the first Lagrangian and is defined as follows:
	         \begin{definition}\label{GOP}
	         	The benchmark  $S^{*}_\tau$   is the strictly positive  portfolio of factors with maximum instantaneous growth rate $g^{\pi}_\tau$ and initial value $S^{*}_{\tau_{t_0}}>0$, where its  weight vector   ${  \pi^{*}_\tau}=(\pi^{*,1}_\tau,..., \pi^{*,n}_\tau)^\top$
	         	is a solution of the well-posed $n$-dimensional constrained quadratic maximization problem
	         	\begin{equation}\label{quop3}
	         		\max \left\{g^{\pi}_\tau
	         		|\pi^{}_\tau \in {\bf R}^{n},  \pi_\tau^\top  {\bf 1}=1\right\},
	         	\end{equation} 
	         	for all $\tau \in [\tau_{t_0},\infty)$.
	         \end{definition}
	         Note that the above portfolios of factors do not invest in the LRP, which is the savings account.  The following properties of the benchmark are derived in  Appendix A:   
	         \begin{theorem}\label{stockGOP}
	         	For a market of factors and $\tau\in[\tau_{t_0},\infty)$,
	         	the sum of the risk premium parameters is conserved and equals the constant $1$, that is,
	         	\begin{equation}\label{Nt=1}
	         		\omega^\top {\bf{1}}
	         		=1.\end{equation}   When constructing the benchmark, the vector of weights assigned to the factors corresponds to the vector of risk premium parameters 
	         	\begin{equation}\label{pi1}
	         		\pi^{*}_\tau=\omega
	         	\end{equation} 
	         	and the benchmark satisfies the SDE  
	         	\begin{equation} \label{e.4.122a}
	         		\frac{dS^{*}_\tau}{S^{*}_\tau}= \hat\lambda_\tau d\tau +(\beta_\tau\omega)^\top(\beta_\tau\omega d\tau+d{\bf {W}}_\tau)
	         	\end{equation}
	         	for	$\tau \in [\tau_{t_0},\infty)$  with $S^*_{\tau_{t_0}}>0$. 
	         \end{theorem}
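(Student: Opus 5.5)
The plan is to solve the constrained quadratic maximization (\ref{quop3}) of Definition \ref{GOP} explicitly by a Lagrange multiplier. We then identify the resulting multiplier with the structure of the market of factors, as given by Theorem 3.1 in \citeN{FilipovicPl09} and encoded in the SDE (\ref{hatB11j21}).

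\textbf{First-order conditions.} Form the Lagrangian
\[
L(\pi_\tau,\eta_\tau)=\hat\lambda_\tau+\pi_\tau^\top\beta_\tau\beta_\tau\Bigl(\omega-\tfrac12\pi_\tau\Bigr)-\eta_\tau\bigl(\pi_\tau^\top{\bf 1}-1\bigr).
\]
Since $\beta_\tau$ is diagonal with strictly positive entries, $\beta_\tau\beta_\tau$ is positive definite. Hence $g^\pi_\tau$ is strictly concave in $\pi_\tau$ and the problem is well posed with a unique maximizer. The first-order condition is $\beta_\tau\beta_\tau(\omega-\pi^*_\tau)=\eta_\tau{\bf 1}$, that is,
\[
\pi^{*,k}_\tau=\omega^k-\eta_\tau(\beta^k_\tau)^{-2},\qquad k\in\{1,\dots,n\}.
\]
Inserting this into (\ref{e.2.10}) gives the benchmark SDE with volatility vector $\beta_\tau\pi^*_\tau$.

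\textbf{Pinning down the multiplier.} The key step is to show $\eta_\tau=0$. I would compare with the general structure in Theorem 3.1 of \citeN{FilipovicPl09}. For a continuous market with a GOP and no LRP, the expected return of each constituent equals the unique net risk-adjusted return (the Lagrange multiplier) plus the covariation of the constituent with the GOP. In our notation this means
\[
\hat\lambda_\tau+\beta^k_\tau\,(\beta_\tau\pi^*_\tau)^k .
\]
Matching this with the drift $\hat\lambda_\tau+(\beta^k_\tau)^2\omega^k$ postulated in (\ref{hatB11j21}) gives $\beta^k_\tau\pi^{*,k}_\tau=\beta^k_\tau\omega^k$ for every $k$. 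Since $\beta^k_\tau>0$, this yields (\ref{pi1}), $\pi^*_\tau=\omega$, and equivalently $\eta_\tau=0$ in the first-order condition. Substituting into (\ref{e.2.10}) with (\ref{e.2.14}) gives
\[
\frac{dS^*_\tau}{S^*_\tau}=\omega^\top\hat\lambda_\tau{\bf 1}\,d\tau+(\beta_\tau\omega)^\top(\beta_\tau\omega\,d\tau+d{\bf W}_\tau).
\]

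\textbf{Conservation law.} The budget constraint $\pi^{*\top}_\tau{\bf 1}=1$ in (\ref{quop3}) now reads $\omega^\top{\bf 1}=1$, which is (\ref{Nt=1}). It also reduces the drift term $\omega^\top\hat\lambda_\tau{\bf 1}$ to $\hat\lambda_\tau$, which gives (\ref{e.4.122a}).

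As a consistency check, the unconstrained growth maximization in the extended market with the savings account gives $\beta_\tau\beta_\tau(\omega-\pi)=0$, so $\pi=\omega$. The conservation law then says that the NP holds nothing in the savings account, so it coincides with the benchmark.

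\textbf{Main obstacle.} The delicate point is justifying $\eta_\tau=0$, i.e.\ that $\hat\lambda_\tau$ in (\ref{hatB11j21}) really is the multiplier of the benchmark's maximization, so that the factor drifts are exactly ``multiplier plus covariance with the benchmark.'' I would first try to read this off directly from Theorem 3.1 of \citeN{FilipovicPl09}. If that identification is not clean, I would argue by contradiction. If $\eta_\tau\neq0$ on a set of positive measure, then $\pi^*_\tau$ depends on the flexible volatilities $\beta_\tau$. The benchmark's covariance with factor $k$ would then not equal $(\beta^k_\tau)^2\omega^k$, which contradicts the postulated common form of the drifts with constant $\omega^k$ for arbitrary volatility processes.
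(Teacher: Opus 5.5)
Your main line is the paper's proof in Appendix A. The structure equation (3.5) of Theorem 3.1 in \citeN{FilipovicPl09} reads the drift in \eqref{hatB11j21} as ``net risk-adjusted return plus covariance with the benchmark''. This gives $(\beta^k_\tau)^2\pi^{*,k}_\tau=(\beta^k_\tau)^2\omega^k$, hence $\pi^*_\tau=\omega$. The budget constraint (their (3.4)) then yields $\omega^\top{\bf 1}=1$ and the SDE \eqref{e.4.122a}. Your explicit first-order condition is a harmless addition.

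Two side remarks are wrong, however, and should be removed.

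\textbf{The fallback ``by contradiction'' cannot work.} From your first-order condition and the budget constraint,
\[
\eta_\tau=\frac{\omega^\top{\bf 1}-1}{\sum_{k=1}^n(\beta^k_\tau)^{-2}},
\]
so $\eta_\tau=0$ is equivalent to the conservation law you are trying to prove. For constants $\omega^k$ with $\omega^\top{\bf 1}\neq 1$, the SDE \eqref{hatB11j21} is perfectly consistent. The benchmark then has weights $\omega^k-\eta_\tau(\beta^k_\tau)^{-2}$ and the Filipovi\'c--Platen multiplier $\hat\lambda_\tau+\eta_\tau$ instead of $\hat\lambda_\tau$, and nothing in the postulated drift form is contradicted. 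For $n=1$, for instance, the budget constraint forces $\pi^*_\tau=1$ whatever $\omega^1$ is. The identification of $\hat\lambda_\tau$ with the multiplier is a modelling input, which the paper states when it introduces $\hat\lambda_\tau$; it cannot be derived.

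\textbf{Your consistency check drops a term.} In the extended market the growth rate contains $\pi_\tau^\top{\bf 1}\hat\lambda_\tau$, not $\hat\lambda_\tau$. The first-order condition is therefore $\hat\lambda_\tau{\bf 1}+\beta_\tau\beta_\tau(\omega-\pi_\tau)=0$. This gives $\pi^{**}_\tau=\omega+\hat\lambda_\tau\beta_\tau^{-2}{\bf 1}$ and the nonzero savings-account weight $-\hat\lambda_\tau{\bf 1}^\top\beta_\tau^{-2}{\bf 1}$ of Theorem \ref{GOPentire}. The NP coincides with the benchmark only when $\hat\lambda_\tau=0$. If they always coincided, $\Lambda_{Q_{S^*}}$ in \eqref{lambda1} would be identically one and BN pricing would be trivial.
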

	         By Equation \eqref{Nt=1}, the above theorem reveals the important property   that the sum of the risk premium parameters is a conserved quantity and equals $1$. For the case when the risk premium parameters would be predictable, square integrable processes, the respective proof of the above theorem does still work,  the risk premium factors would still  add up to one, and these would still represent the weights for the benchmark. \\

	        \subsection{Num\'eraire Portfolio}
	        The savings account denominated savings account $S^0_\tau\equiv1$ equals the constant $1$
	        for  $\tau\in[\tau_{t_0},\infty)$. 
	        Let us extend the market of factors by adding the savings account $S^0_\tau\equiv1$ to the factors ${\bf{S}}_\tau=(S^1_\tau,...,S^n_\tau)^\top$. In the {\em extended market} one can form  self-financing portfolios by combining the $n+1$  securities  $S^0_\tau,...,S^n_\tau$.   A positive  self-financing  portfolio $S^{\pi}_\tau$ in the extended market is described by its  initial value $S^{\pi}_{\tau_{t_0}}>0$ and its   risky  security  weights    $\pi_\tau=({\pi}^1_\tau,...,{\pi}^n_\tau)^\top$, where the savings account weight equals
	        \begin{equation}\label{11pit1}
	        	\pi^0_\tau=1-  \pi^\top_\tau {\bf{1}}
	        \end{equation}  for $\tau \in [\tau_{t_0},\infty)$. 
	        The portfolio $S^\pi_\tau$ satisfies  the SDE
	        \begin{equation} \label{e.2.100}
	        	\frac{dS^{\pi}_\tau}{S^{\pi}_\tau}=%\pi^0_\tau \frac{dS^0_\tau}{S^0_\tau}+ 
	        	\pi_\tau^\top\frac{d{\bf{S}}_\tau}{{\bf{S}}_\tau},
	        \end{equation}
	        and its instantaneous  growth rate, the drift of its logarithm, is of the form
	        \begin{equation}
	        	g^{\pi}_\tau=
	        	\pi^\top_\tau {\bf{1}}\hat\lambda_\tau+({\pi}^{}_\tau)^\top \beta_\tau\beta_\tau(\omega
	        	-\frac{1}{2}\pi^{}_\tau)
	        \end{equation}
	        for $\tau \in [\tau_{t_0},\infty)$.  \\
	        
	        \citeN{Long90} introduced the num\'eraire portfolio (NP) as a general market notion for the GOP of the extended market, showing that risk-neutral prices can be derived using the NP as numeraire and the real-world probability measure $P$ as pricing measure, which requires an equivalent risk-neutral measure. 
	         The benchmark approach emphasizes that only the existence of the NP is needed to compute minimal prices for replicable contingent claims.  \citeN{KaratzasKa07} showed that the existence of the NP is equivalent to their NUPBR condition. In analogy to Definition \ref{GOP}, and assuming {\bf{A 2}}, the second Lagrangian is then relevant, leading to the following concept:
	        \begin{definition}\label{GOP1}
	        	For the extended market, the NP $S^{**}_\tau$   is the positive  portfolio of the savings account and the factors with maximum instantaneous growth rate $g^{\pi^{**}}_\tau$ and initial value $S^{**}_{\tau_{t_0}}>0$, where its  weight vector   $  (\pi^{**,0}_\tau,\pi^{**,1}_\tau,...,\\ \pi^{**,n}_\tau)^\top=(\pi^{**,0}_\tau,{\pi^{**}_\tau}^\top)$
	        	is a solution of the well-posed $(n+1)$-dimensional constrained quadratic maximization problem
	        	\begin{equation}\label{quop3}
	        		\max \left\{g^{\pi}_\tau 
	        		|(\pi^0_\tau,\pi^\top_\tau)^\top \in {\bf R}^{n+1},\pi^0_\tau+ \pi_\tau^\top  {\bf 1}=1\right\},
	        	\end{equation} 
	        	for all $\tau \in [\tau_{t_0},\infty)$.
	        \end{definition} Under the   Assumption {\bf{A2}},  the existence of the NP is secured and the following result  derived in Appendix B:
	        \begin{theorem}\label{GOPentire}
	        	For the given extended market, the  NP $S^{**}_\tau$ satisfies the SDE
	        	\begin{equation} \label{bare.4.113}
	        		\frac{d S^{**}_\tau}{ S^{**}_\tau}=
	        		{\theta^{**}_\tau}^\top (\theta^{**}_\tau d\tau+d{\bf W}_\tau)
	        	\end{equation}
	        	with initial value $ S^{**}_{\tau_{t_0}}>0$,  {\em  market price of risk vector}
	        	\begin{equation}\label{thetak}
	        		\theta^{**}_\tau=\hat\lambda_\tau \beta_\tau^{-1}{\bf{1}}+\beta_\tau\omega, 
	        	\end{equation}
	        the	NP 
	        	weights
	        	\begin{equation}
	        		\pi^{**}_\tau=( \pi^{**,1}_\tau,..., \pi^{**,n}_\tau)^\top=\hat\lambda_\tau \beta_\tau^{-2}{\bf{1}}+\omega,
	        	\end{equation} to be invested in the factors $S^1_\tau,...,S^n_\tau$, and the weight
	        	\begin{equation}\label{rt3}
	        		\pi^{**,0}_\tau
	        		=-\hat\lambda_\tau{\bf{1}}^\top\beta_\tau^{-2}{\bf{1}} 
	        	\end{equation}
	        	to be invested in the savings account $S^0_\tau$ 
	        	for  	$\tau\in[\tau_{t_0},\infty)$.

	        	\end{theorem}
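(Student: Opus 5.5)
The plan is to solve the constrained quadratic maximization in Definition \ref{GOP1} directly and then substitute the optimal weights into the portfolio SDE \eqref{e.2.100}.

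\textbf{Step 1: remove the constraint.} The savings account weight $\pi^0_\tau$ enters neither the growth rate $g^\pi_\tau$ nor the SDE, since $S^0_\tau\equiv1$. So the constraint $\pi^0_\tau+\pi_\tau^\top{\bf 1}=1$ only fixes $\pi^0_\tau$ once $\pi_\tau$ is chosen, as in \eqref{11pit1}. The problem becomes an unconstrained maximization over $\pi_\tau\in{\bf R}^n$ of
\[
g^\pi_\tau=\pi^\top_\tau{\bf 1}\hat\lambda_\tau+\pi_\tau^\top\beta_\tau\beta_\tau\bigl(\omega-\tfrac12\pi_\tau\bigr).
\]
The quadratic form has Hessian $-\beta_\tau^2$. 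This matrix is negative definite because each $\beta^k_\tau>0$ and $\beta_\tau$ is diagonal. Hence the problem is strictly concave and well posed, and a unique maximizer exists pointwise for each $\tau$ and $\omega$.

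\textbf{Step 2: first-order condition.} Setting the gradient to zero gives
\[
\hat\lambda_\tau{\bf 1}+\beta_\tau^2\omega-\beta_\tau^2\pi_\tau=0 .
\]
Therefore $\pi^{**}_\tau=\hat\lambda_\tau\beta_\tau^{-2}{\bf 1}+\omega$. Substituting into \eqref{11pit1} gives
\[
\pi^{**,0}_\tau=1-\hat\lambda_\tau{\bf 1}^\top\beta_\tau^{-2}{\bf 1}-\omega^\top{\bf 1}.
\]
By \eqref{Nt=1} of Theorem \ref{stockGOP}, $\omega^\top{\bf 1}=1$, so this reduces to \eqref{rt3}. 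This is the one place where an earlier result is genuinely needed. Predictability and square integrability of these weights follow from the assumptions on $\hat\lambda_\tau$ and $\beta_\tau$.

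\textbf{Step 3: derive the SDE.} Insert $\pi^{**}_\tau$ into \eqref{e.2.100} using \eqref{e.2.14}. The drift is
\[
{\pi^{**}_\tau}^\top\bigl(\hat\lambda_\tau{\bf 1}+\beta_\tau^2\omega\bigr)
\]
and the diffusion is ${\pi^{**}_\tau}^\top\beta_\tau\,d{\bf W}_\tau$. Since $\beta_\tau$ is diagonal and symmetric,
\[
\beta_\tau\pi^{**}_\tau=\hat\lambda_\tau\beta_\tau^{-1}{\bf 1}+\beta_\tau\omega=\theta^{**}_\tau,
\]
so the diffusion term is ${\theta^{**}_\tau}^\top d{\bf W}_\tau$. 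For the drift,
\[
\hat\lambda_\tau{\bf 1}+\beta_\tau^2\omega=\beta_\tau\theta^{**}_\tau,
\]
so the drift equals ${\pi^{**}_\tau}^\top\beta_\tau\theta^{**}_\tau={\theta^{**}_\tau}^\top\theta^{**}_\tau$. This yields \eqref{bare.4.113}.

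\textbf{Main obstacle.} There is no deep difficulty. The only delicate points are, first, noting that the savings account term has no dynamics, so the $(n+1)$-dimensional constrained problem collapses to an unconstrained $n$-dimensional one. Second, the positivity of $S^{**}_\tau$ has to be justified; it holds because the solution is a stochastic exponential, with existence guaranteed by Assumption \textbf{A2}.
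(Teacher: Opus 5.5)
Your proof is correct, but it takes a different route from the paper.

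The paper never solves the optimization in Definition~\ref{GOP1} itself. Instead it invokes Theorem 3.1 of \citeN{FilipovicPl09}:
\begin{itemize}
\item since the savings account is traded, the risk-adjusted return of the extended market is zero;
\item equations (3.8) and (3.11) there give $\theta^{**,k}_\tau=\pi^{**,k}_\tau\beta^k_\tau$ and the SDE \eqref{bare.4.113};
\item the matrix equation (3.5) reduces to $(\beta^k_\tau)^2\pi^{**,k}_\tau=\hat\lambda_\tau+(\beta^k_\tau)^2\omega^k$, which is solved componentwise;
\item the savings account weight then follows from $\omega^\top{\bf 1}=1$, exactly as in your Step 2.
\end{itemize}

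You replace that citation with an elementary pointwise argument: eliminating $\pi^0_\tau$, strict concavity from the Hessian $-\beta_\tau^2$, the first-order condition, and direct substitution into \eqref{e.2.100}. This makes the proof self-contained and shows uniqueness of the maximizer explicitly. Your remark that the budget constraint becomes slack once $\pi^0_\tau$ absorbs it is precisely why the Lagrange multiplier of the extended problem vanishes, whereas for the benchmark problem of Definition~\ref{GOP} it equals $\hat\lambda_\tau$. The paper's route instead buys uniformity with Appendix A, since one structural theorem yields both the benchmark and the NP.

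One minor overstatement: square integrability of $\hat\lambda_\tau\beta_\tau^{-2}{\bf 1}$ in the expectation sense does not follow from that of $\hat\lambda_\tau$ and $\beta_\tau$. Continuity and strict positivity of $\beta_\tau$ only give pathwise local square integrability. That is all your stochastic-exponential positivity argument needs, and existence of the NP is in any case part of {\bf A2}.
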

	        	 
	        The flexibility of factor volatilities and risk premium parameters, along with Theorem 3.1 in \citeN{FilipovicPl09}, ensures a broad set of continuous market dynamics can be described using the extended market model, including stocks formed as portfolios of factors. The theorem remains valid when risk premium parameters are predictable, square-integrable processes.

	           \section{Surprisal-Minimized  Market }\label{Section4} \setcA \setcB
	           \subsection{Normalized Factors} 
	           By applying the Assumption {\bf{A2}}, we  characterize below the state variables, the normalized factors, as independent, stationary, scalar diffusion processes that evolve in respective   activity times. 
	          
	           We assume for  $k\in\{1,...,n\}$ the {\em  $k$-th normalized  factor} $Y^k_{\tau^k_\tau}$ to be given as the ratio
	           \begin{equation}\label{Yktayt1}
	           Y^k_{\tau^k_\tau}=\frac{S^k_\tau}{B_\tau e^{\tau^k_\tau}
	           }
	           \end{equation} with {\em basis exponential}
	           \begin{equation}\label{basis}
	           B_\tau=\exp\left\{\int_{\tau_{t_0}}^{\tau}\hat\lambda_sds\right\}
	           \end{equation}
	           for $\tau\in[\tau_{t_0},\infty)$, representing a scalar diffusion with stationary probability density $p^k_\infty$. The stationary process $Y^k_{\tau^k_\tau}$ is assumed to evolve in the   {\em $k$-th activity time}
	           \begin{equation}\label{bartaukt}
	           \tau^k_\tau=\tau_{t_0}+(\tau-\tau_{\tau_{t_0}}) a^k
	           \end{equation} for $\tau\in[\tau_{t_0},\infty)$, where $a^k>0$ denotes the strictly positive {\em $k$-th activity} and  $\tau_{t_0}\in(-\infty,\infty)$ the {\em  initial market time}.  By Equation \eqref{Yktayt1} it follows the  {\em $k$-th initial factor value} \begin{equation}S^k_{\tau_{t_0}}=Y^k_{{\tau_{t_0}}}e^{{\tau_{t_0}}}>0.\end{equation}  By application of the It\^o formula it is straightforward to see that a normalized factor has the same volatility as the  respective factor.
	           For  
	           the stationary $k$-th normalized factor process $Y^k=\{Y^k_{\tau^k}, \tau^k\in[{\tau_{t_0}},\infty)\}$, $k\in\{1,...,n\}$, we denote 
	           by $p^k_\infty$ its stationary probability density, 
	           where  \begin{equation}{\bf{E}}^{p^k_\infty}\left(f(Y^k_{\tau^k_\tau})\right)=\int_{0}^{\infty}f(y)p^k_\infty (y)dy\end{equation} denotes the {\em average}  with respect to  this density.  \\
	          Four constraints for the $k$-th normalized factor dynamics emerge as follows:\\ First, the average level of a normalized factor is a crucial quantity because it determines the average level of the value of the factor at a given market time, which models an average price.
	           Since there would exist some ambiguity concerning the average level of the $k$-th normalized factor,  we parametrize the arithmetic average of $Y^k$,  with a view towards  the proof of Theorem \ref{informationtheorem1}, so that
	           \begin{equation}\label{omega}
	           4\omega^k= {\bf{E}}^{p^k_\infty}\left(Y^k_{\tau^k_\tau}\right)>0
	           \end{equation}
	           for $\tau\in[\tau_{t_0},\infty)$.\\ Second, under the benchmark approach, the logarithm of portfolios and, in particular, the logarithmic average of the portfolio with the largest instantaneous growth rate, the GOP, are of central importance.  Therefore, we parametrize the logarithmic average of $Y^k$ by the flexible constant
	           \begin{equation} \label{barstalogmean*}
	           {\bf{E}}^{p^k_\infty}(\ln(Y^k_{\tau^k_\tau}))=\zeta^k\in(-\infty,\infty) \end{equation} for $\tau\in[\tau_{t_0},\infty)$.\\
	          
	         Third, by Theorem \ref{stockGOP} and \eqref{Nt=1}, we have  the constraint 
	           \begin{equation}
	           \sum_{k=1}^{n}\omega^k=1.
	           \end{equation}
	           
	           Fourth, with a view towards the proof of Theorem \ref{informationtheorem}, we introduce for the activities the  constraint
	           \begin{equation}\label{averactivity}
	           \sum_{k=1}^{n}\omega^k\sqrt{\frac{1}{a^k}}=1.
	           \end{equation}
	           	           \\
	           
	           It is our goal to specify the $k$-th normalized factor as a flexible, stationary scalar diffusion under the above described constraints. We achieve this by setting the $k$-th factor volatility equal to the function
	           \begin{equation}\label{beta}
	           \beta^k_\tau=\sqrt{\frac{4a^k}{\phi^k(Y^k_{\tau^k_\tau})}},
	           \end{equation} 
	           where the {\em $k$-th volatility function}  $\phi^k(.)$ is  a   flexible, infinitely often continuously differentiable, strictly positive function of the value of the $k$-th normalized factor.  According to Assumption {\bf{A2}},  when assuming stationary normalized factors, it is  required that the NP $S^{**}_\tau$ exists  and that  the $k$-th normalized factor in the $k$-th activity time is modelled by a unique, independent scalar diffusion process; see, e.g., Section 7.7 in \citeN{PlatenHe06}. By the equations \eqref{Yktayt1}, \eqref{hatB11j21}, \eqref{bartaukt}, and an application of the It\^o formula, the $k$-th normalized factor $Y^k_{\tau^k_\tau}$ satisfies  the SDE 
	           \begin{equation}\label{bardYtau}
	           d Y^k_{\tau^k_\tau}=Y^k_{\tau^k_\tau}\left(4\phi^k(Y^k_{\tau^k_\tau})^{-1}\omega^k-1\right)a^k d\tau+Y^k_{\tau^k_\tau}\phi^k(Y^k_{\tau^k_\tau})^{-\frac{1}{2}}\sqrt{4a^k}d W^k_{\tau}
	           \end{equation}
	           of a scalar diffusion with instantaneous reflection at zero for $\tau\in[\tau_{t_0},\infty)$.  In the above SDE, the diffusion coefficient of the $k$-th normalized factor is  that of a  flexible  scalar diffusion process.\\ The drift coefficient is determined using the Itô formula. Consequently, this SDE encompasses a wide variety of scalar diffusion processes that evolve according to their activity times.
	           For instance, assigning $\phi^k(y)=\bar y$ constant models factors with constant volatilities, as described by \citeN{BlackSc73}. If $ \phi^k(y)$ is set as a power function, such as  $ \phi^k(y)=y^x$, the resulting dynamics exhibit constant elasticity of variance–type stochastic volatility, following the work of \citeN{Cox75}. Choosing $\phi^k(y)=y$ (the identity function) produces 3/2-type stochastic volatility models, as noted by \citeN{Heston97} and \citeN{Platen97d}. If $\phi^k(y)=y^{-1}$, the model coincides with the \citeN{Heston93} volatility specification, while $\phi^k(y)=y^{-2}$ corresponds to the volatility of the Bachelier model (\citeN{Bachelier00}). By allowing $\phi^k(.)$ to be any suitable general function of the normalized factor, one obtains a broad family of local volatility function models; see, for example, \citeN{BreedenLi78}, \citeN{DermanKa94}, and \citeN{Dupire94}.
	           Considered in calendar time, the class of stochastic volatility models becomes even broader, since the market time remains highly adaptable and can be defined in numerous ways, with its derivative serving as a subordinator; see \citeN{Bochner55}, \citeN{Clark73}, and \citeN{HaganKuLeWo02}. The Assumption {\bf{A1}} allows sufficient flexibility to choose the market time as a nondecreasing stochastic process, potentially extending beyond the semimartingale framework.\\

	           \begin{definition}We call 
	           the above described market, formed by  the  savings account $S^{0}_\tau$ and the $n$ factors $S^1_\tau,...,S^n_\tau$ as 
	           primary security accounts, a {\em stationary market},  where  the net risk-adjusted return $\hat\lambda_\tau$ represents an %independent,
	           adapted, square integrable process and the normalized factors stationary, independent, scalar diffusions in market time.
	           \end{definition}
	           This does not mean that in a stationary market the factors or resulting stocks have to be stationary processes. Only the volatilities of the normalized factors are modelled in market time as processes with stationary probability densities. The $k$-stationary probability density $p^k_\infty(y)$, $y \in(0,\infty)$, of the $k$-th normalized factor is the solution of a respective  ordinary differential equation, the Fokker-Planck equation, as shown, e.g., in Section 4.5 in \citeN{PlatenHe06}.

	           \subsection{Surprisal Minimization}\label{selfinformationminimization}
	           
The identification of stationary dynamics in normalized factors depends on the principle driving financial market behaviour. For a stationary market, this principle is formulated through a Lagrangian based on stationary probability densities, which must be minimized to find the optimal joint density. Shannon's information theory, as described in \citeN{Shannon48}, highlights information content as a key concept derived from event probabilities, with the {\em surprisal} representing the expected information content. This paper assumes in Assumption {\bf{A3}} that market prices minimize the surprisal of the joint probability density of normalized factors, suggesting prices embed maximum beneficial information on average. Information content for  a continuous variable is measured by the negative logarithm of its probability density, as described in\citeN{Kullback59}, and surprisal is the expected information content. Thus, rather than focusing on expected utilities, returns, or monetary measures, fundamental financial market quantities are in this paper interpreted as information-theoretic quantities. Following Shannon and Kullback, this leads to the proposed notion below.

	           \begin{definition}
	           For a stationary market  
	           with   stationary joint density 
	           \begin{equation}p_{\infty}=\prod_{k=1}^{n}p^k_{\infty}\end{equation} of the normalized factors, the {\em surprisal} $\mathcal{I}(p_\infty)$ of $p_\infty$ is defined	as  its average information content, which is given by the sum of the integrals
	           \begin{equation}\label{Ipq}
	           	\mathcal{I}(p_\infty)=	
	           	-\sum_{k=1}^{n}	\int_{0}^{\infty} p^k_\infty(y)\ln(p^k_\infty(y))dy,
	           \end{equation} 
	           as long as, the above quantities exist.  
	           \end{definition}
	          The stationary market reaches its lowest surprisal when, within the given specific parametrization and set of constraints, the expected information content of the stationary joint density of normalized factors is minimized. This process yields the third Lagrangian in our pursuit of an idealized market model. We now present the following kind of market: 
	           \begin{definition} A stationary   market  
	           is said to be {\em surprisal-minimized} 
	           if	the surprisal  
	           $\mathcal{I}(p_\infty)$ 
	           of the stationary joint density $p_\infty$ of the normalized factors   is minimized for the  given parametrization and constraints.
           \end{definition} 
   The surprisal-minimized market can be seen as 'efficient,' though this differs from efficiency as defined by \citeN{Fama70} and \citeN{GrossmanSt80}. Negative surprisal, mathematically equal to entropy, is important in the derivation of natural laws across disciplines; see \citeN{KosmannSc18}. The next theorem explains the results of minimizing the surprisal of  $p_\infty$, as per Assumption {\bf{A3}}; its proof is provided in Appendix C.
\begin{theorem}\label{informationtheorem1} For $k\in\{1,...,n\}$ and $\tau\in[\tau_{t_0},\infty)$, the dynamics of the $k$-th normalized factor $Y^k_{\tau^k_\tau}$ of a surprisal-minimized   market is that of a  CIR process  of dimension $\frac{4}{n}$, which evolves  in the $k$-th activity time $\tau^k_\tau$ as the unique strong solution of 
the SDE
\begin{equation}\label{dY21}
dY^k_{\tau^k_\tau}=\left(\frac{4}{n} -Y^k_{\tau^k_\tau}\right)a^kd \tau+ \sqrt{  Y^k_{\tau^k_\tau}4a^k} d W^k_\tau,
\end{equation}	
with its initial value $
Y^k_{{\tau_{t_0}}}=S^k_{\tau_{t_0}}e^{-\tau^1_{\tau_{t_0}}}$
distributed according to the surprisal-minimized stationary density $  p^{k}_{\infty}$. The latter  is  a gamma density with $\frac{4}{n}$ degrees of freedom and mean $\frac{4}{n}$. 
\end{theorem}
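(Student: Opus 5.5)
Because the surprisal \eqref{Ipq} is a sum of entropy-type terms, one per independent normalized factor, the plan is to minimize each term $-\int_0^\infty p^k_\infty\ln p^k_\infty\,dy$ separately. Each term is subject to three constraints. The first is normalization $\int p^k_\infty=1$. The second is the arithmetic mean constraint \eqref{omega}, ${\bf E}^{p^k_\infty}(Y^k)=4\omega^k$. The third is the logarithmic mean constraint \eqref{barstalogmean*}, ${\bf E}^{p^k_\infty}(\ln Y^k)=\zeta^k$.

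The steps are as follows.
\begin{enumerate}
\item Form the Lagrangian
\[
\int_0^\infty p\ln p\,dy+\mu_0\Bigl(\int p\,dy-1\Bigr)+\mu_1\Bigl(\int y\,p\,dy-4\omega^k\Bigr)+\mu_2\Bigl(\int \ln y\,p\,dy-\zeta^k\Bigr).
\]
\item Take its pointwise first variation. This yields $p^k_\infty(y)=C\,y^{\nu-1}e^{-y/\theta}$, a gamma density with shape $\nu$ and scale $\theta$. Here $\nu,\theta$ are in one-to-one correspondence with $(4\omega^k,\zeta^k)$, since the mean is $\nu\theta$ and the log-mean is $\psi(\nu)+\ln\theta$. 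Convexity of $p\ln p$ on the convex constraint set shows this stationary point is the extremizer of the Lagrangian. The surprisal of a gamma density is
\[
\nu+\ln\theta+\ln\Gamma(\nu)+(1-\nu)\psi(\nu).
\]
\item Since $\zeta^k$ is a flexible parameter, minimize this expression over the remaining freedom, jointly over $k$, under the coupling constraint $\sum_k\omega^k=1$, i.e.\ $\sum_k\nu^k\theta^k=4$.
\end{enumerate}

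Next, connect the density to the dynamics. The stationary Fokker--Planck equation for \eqref{bardYtau} determines $\phi^k$ from $p^k_\infty$. With diffusion coefficient $4a^kY^2/\phi^k(Y)$ and drift $a^kY(4\omega^k/\phi^k-1)$, the stationary density satisfies
\[
p^k_\infty(y)\propto \frac{\phi^k(y)}{y^2}\exp\Bigl\{\int^y\frac{2\,\mathrm{drift}}{\mathrm{diff}}\Bigr\}.
\]
Inserting the gamma form forces $\phi^k(y)=1/y$ up to constants, and then $\theta=1$ and $\nu=4\omega^k$. This is the Heston/CIR case, and \eqref{bardYtau} reduces to
\[
dY=(4\omega^k-Y)a^k\,d\tau+\sqrt{4a^kY}\,dW^k_\tau .
\]
The normalization $4\omega^k$ in \eqref{omega} was clearly chosen for exactly this matching. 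Existence and uniqueness of a strong solution then follow from Yamada--Watanabe for square-root diffusion coefficients, with reflection at zero when $4\omega^k<2$. The initial value formula follows from \eqref{Yktayt1}, and the dimension is $4\omega^k$ with mean $4\omega^k$.

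The main obstacle is showing $\omega^k=1/n$. My intended route is that, with $\theta=1$, the total surprisal becomes $\sum_k h(4\omega^k)$ with
\[
h(\nu)=\nu+\ln\Gamma(\nu)+(1-\nu)\psi(\nu).
\]
I would minimize this under $\sum_k\omega^k=1$ using a Lagrange multiplier. Symmetry then gives $\omega^k=1/n$, provided $h$ is convex on the relevant range, so that the symmetric stationary point is the minimum rather than a maximum. Checking this convexity is the delicate part, via $h''(\nu)=-\psi'(\nu)+(1-\nu)\psi''(\nu)$ and the known bounds on polygamma functions. If convexity fails, I would instead argue that the constraints and the flexible $\zeta^k$ admit only the symmetric critical point, and compare the boundary behaviour as some $\omega^k\to0$.
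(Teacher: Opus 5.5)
\textbf{Gap: the step you flag as the main obstacle fails.} The function $h(\nu)=\nu+\ln\Gamma(\nu)+(1-\nu)\psi(\nu)$ is strictly \emph{concave} on all of $(0,\infty)$, not convex. Use $\psi'(\nu)=\int_0^\infty \frac{t\,e^{-\nu t}}{1-e^{-t}}\,dt$ and $\psi''(\nu)=-\int_0^\infty \frac{t^2e^{-\nu t}}{1-e^{-t}}\,dt$, and integrate by parts once in the term $\nu\psi''(\nu)$. This gives
\[
h''(\nu)=-\bigl[\psi'(\nu)+(\nu-1)\psi''(\nu)\bigr]=-\int_0^\infty e^{-\nu t}\,\frac{t\,(t-1+e^{-t})}{(1-e^{-t})^2}\,dt<0 .
\]
By Jensen, $\omega^k=\frac1n$ is therefore the unique \emph{maximum} of your objective $\sum_k h(4\omega^k)$ under $\sum_k\omega^k=1$. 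Moreover $h(\nu)=-\frac1\nu-\ln\nu+O(1)$ as $\nu\downarrow 0$, so the total surprisal is unbounded below as any single $\omega^k\to0$. The minimization you set up has no solution, and your fallback (comparing boundary behaviour) would confirm this rather than rescue the symmetric point.

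The source is a sign switch inside your own plan. Your step-2 Lagrangian contains $+\int p\ln p$, so the gamma density you correctly obtain there \emph{minimizes} $\int p\ln p$; that is, it maximizes $\mathcal I(p^k_\infty)$ at fixed $(\omega^k,\zeta^k)$. In step 3 you then minimize $\mathcal I$. Keep the step-2 objective throughout instead, i.e.\ minimize $\sum_k\int p^k_\infty\ln p^k_\infty=-\sum_k\bigl[h(2\omega^k)+\ln 2\bigr]$. Then the concavity above is exactly what yields $\omega^k=\frac1n$ as the unique optimum. This orientation matches the paper's first-variation step, which likewise yields the maximum-entropy (gamma) density. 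The paper reaches $\omega^k=\frac1n$ by a different device: it writes $\mathcal I(p_\infty)$ as $n$ times the squared mean plus the dispersion of the numbers $\sqrt{\mathcal I(p^k_\infty)}$, sets the dispersion to zero so that all individual surprisals coincide, and then invokes $\sum_k\omega^k=1$.

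\textbf{Errors in matching the density to the dynamics.}
\begin{itemize}
\item The gamma form gives $\phi^k(y)=y$, as in the paper, not $1/y$. With $\phi^k(y)=1/y$, \eqref{bardYtau} becomes $dY=(4\omega^kY^2-Y)a^k\,d\tau+2\sqrt{a^k}\,Y^{3/2}\,dW^k_\tau$. The stationary-density formula then gives a function proportional to $y^{2\omega^k-3}e^{1/(2y)}$, which is not integrable at $0$.
\item The stationary law of the CIR equation you end with has shape $2\omega^k$ and scale $2$, not $\nu=4\omega^k$, $\theta=1$. This is the $\chi^2$ law with $4\omega^k$ degrees of freedom in \eqref{qY1}, which is what ``degrees of freedom'' means in the statement. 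So the objective in the last step is $\sum_k[h(2\omega^k)+\ln 2]$; concavity makes this immaterial for the conclusion.
\item ``Forces'' overstates uniqueness. For every $c\ge 0$, $\phi(y)=y/(1+c\,e^{y/2})$ gives exactly the same zero-flux solution $\propto y^{2\omega^k-1}e^{-y/2}$. You therefore need a restriction such as the paper's polynomial ansatz for $\phi^k$.
\end{itemize}
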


As requested by Assumption {\bf{A3}}, the $k$-th normalized factor is a stationary scalar diffusion. It  generates a stationary volatility process in market time and has, by construction, the same volatility as the  $k$-th factor.
Therefore, the squared volatility \begin{equation}(\beta^k_\tau)^2=\frac{4a^k}{Y^k_{\tau^k_\tau}}\end{equation} of the $k$-th  factor with respect to the market time $\tau$ is generated by the squared volatility  of the  $k$-th normalized factor. It follows by application of the It\^o formula and using \eqref{beta} the {\em $3/2$ stochastic volatility dynamics} that are characterized by the SDE
\begin{equation}
d (\beta^k_\tau)^2= \left(a^k+\left(1-\frac{1}{n}\right)(\beta^k_\tau)^2\right)(\beta^k_\tau)^2d\tau-\left((\beta^k_\tau)^2\right)^{\frac{3}{2}}dW^k_\tau	\end{equation}
for $\tau\in[\tau_{t_0},\infty)$, as long as $Y^k_{\tau^k_\tau}>0$. A $3/2$ volatility model was  proposed in \citeN{Platen97d}   as the squared volatility model of a GOP. The latter model became in \citeN{Platen01a} named  {\em minimal market model} (MMM); see \citeN{HulleySc10}. 	A $3/2$-type volatility model was independently suggested in \citeN{Heston97} for derivative pricing due to its excellent tractability. The current paper names the entire idealized financial market model MMM that results from applying all four assumptions.

By applying Equation~\eqref{Yktayt1}, \eqref{hatB11j21}, and the It\^o formula, one obtains the factor dynamics as follows:
\begin{corollary}
For a  surprisal-minimized market and $k\in\{1,...,n\}$,   the $k$-th factor can be expressed  as the product
\begin{equation}\label{1Akt2}  S^k_\tau=Y^k_{\tau^k_\tau}B_\tau e^{\tau^k_\tau},
\end{equation}  which satisfies   in the market time $\tau \in[\tau_{t_0},\infty)$ the SDE
\begin{equation}\label{Akt} d   S^k_\tau= \hat\lambda_\tau S^k_\tau d\tau+
\frac{4}{n}B_\tau a^k e^{\tau^k_\tau}d\tau
+\sqrt{ S^k_\tau 4B_\tau a^k e^{\tau^k_\tau}}dW^k_\tau   
\end{equation}
of  a squared radial Ornstein-Uhlenbeck  (SROU)  process with  dimension $\frac{4}{n}$  for $\tau \in[\tau_{t_0},\infty)$.   
\end{corollary}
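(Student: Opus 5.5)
The plan is to start from the definition \eqref{Yktayt1} of the normalized factor, which gives the product representation \eqref{1Akt2} at once. Write $X_\tau:=B_\tau e^{\tau^k_\tau}$. By \eqref{basis} and \eqref{bartaukt}, $X_\tau$ is a continuous process of finite variation with
\[
dX_\tau=X_\tau\,(\hat\lambda_\tau+a^k)\,d\tau .
\]
It has no martingale part, so the covariation term vanishes when we apply the It\^o product rule to $S^k_\tau=Y^k_{\tau^k_\tau}X_\tau$.

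The product rule gives
\[
dS^k_\tau=X_\tau\, dY^k_{\tau^k_\tau}+Y^k_{\tau^k_\tau}X_\tau(\hat\lambda_\tau+a^k)\,d\tau .
\]
Next we insert the surprisal-minimized dynamics \eqref{dY21} from Theorem \ref{informationtheorem1}. The drift becomes
\[
X_\tau\Big(\tfrac{4}{n}-Y^k_{\tau^k_\tau}\Big)a^k+Y^k_{\tau^k_\tau}X_\tau(\hat\lambda_\tau+a^k)
=\hat\lambda_\tau S^k_\tau+\tfrac{4}{n}B_\tau a^k e^{\tau^k_\tau},
\]
since the terms $\pm a^kY^k_{\tau^k_\tau}X_\tau$ cancel. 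The diffusion term is $X_\tau\sqrt{4a^kY^k_{\tau^k_\tau}}\,dW^k_\tau$. Because $X_\tau>0$, we may write $X_\tau\sqrt{Y^k_{\tau^k_\tau}}=\sqrt{X_\tau\, S^k_\tau}$, which turns it into $\sqrt{S^k_\tau\,4B_\tau a^k e^{\tau^k_\tau}}\,dW^k_\tau$. This gives exactly \eqref{Akt}.

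As a consistency check, I would compare with \eqref{hatB11j21} and \eqref{beta}. From $(\beta^k_\tau)^2=4a^k/Y^k_{\tau^k_\tau}$ we get $\beta^k_\tau S^k_\tau=\sqrt{4a^kS^k_\tau X_\tau}$. The drift term $(\beta^k_\tau)^2\omega^kS^k_\tau$ then equals $4a^k\omega^kX_\tau$, which agrees with the drift above if $\omega^k=1/n$. That value is forced by Theorem \ref{informationtheorem1}, since by \eqref{omega} the gamma mean is $4\omega^k=\frac{4}{n}$.

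The last step is to identify \eqref{Akt} as an SROU process of dimension $\frac{4}{n}$. I expect this to be the main point needing care. I would introduce the time change $\varphi^k_\tau=\int_{\tau_{t_0}}^\tau B_s a^k e^{\tau^k_s}\,ds$ and a Brownian motion $\tilde W$ with $d\tilde W_{\varphi^k_\tau}=\sqrt{B_\tau a^ke^{\tau^k_\tau}}\,dW^k_\tau$. In this activity time, \eqref{Akt} reads
\[
dS^k=\hat\lambda\, S^k\,\frac{d\tau}{d\varphi^k}\,d\varphi^k+\tfrac{4}{n}\,d\varphi^k+2\sqrt{S^k}\,d\tilde W .
\]
This is the squared radial Ornstein-Uhlenbeck form of \citeN{GoingYo03}: dimension $\delta=\frac{4}{n}$, diffusion coefficient $2\sqrt{x}$, and a linear drift in $S^k$. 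Existence and uniqueness follow from those of $Y^k$ together with the positive finite-variation transformation $X_\tau$.
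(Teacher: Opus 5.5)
Your proposal is correct and follows essentially the paper's route. The product form is just \eqref{Yktayt1}, and \eqref{Akt} comes from a direct It\^o computation on $S^k_\tau=Y^k_{\tau^k_\tau}B_\tau e^{\tau^k_\tau}$. You insert \eqref{dY21}, whereas the paper cites \eqref{hatB11j21}; your consistency check with $\phi^k(y)=y$ and $\omega^k=\frac{1}{n}$ shows these agree.

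One small caution: in your $\varphi^k$-time the linear drift coefficient $\hat\lambda_\tau/(B_\tau a^k e^{\tau^k_\tau})$ is not constant. So the SROU identification holds only in the paper's generalized sense, with the dimension $\frac{4}{n}$ read off from the drift versus the quadratic-variation rate. It is not a constant-parameter SROU in the sense of G\"oing-Jaeschke and Yor, but the generalized sense is all the statement asserts.
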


\subsection{Benchmark Dynamics}\label{sec2.3}
Applying Theorem \ref{stockGOP}, we see that in a market that minimizes surprisal, the benchmark is created by assigning equal weights to all factors. Specifically, at any market time $\tau\in [\tau_{t_0},\infty)$, the weight of the $k$-th factor is $\pi^{*,k}_\tau=\frac{1}{n}$. Since the volatility of a factor can become infinite if that factor reaches zero, we define  $\mathcal{T}\subset [\tau_{t_0},\infty)$  as the set of all market times when every one of the $n$ factors remains strictly positive. The SDEs governing the SROU processes for both the factors and their normalized counterparts have unique strong solutions, and these solutions are still well-defined even at points when a factor hits zero. To prevent complications from infinite volatility, we focus only on market times contained in $\mathcal{T}\subset [\tau_{t_0},\infty)$. The dynamics for the benchmark under these conditions are detailed in Appendix D.

\begin{theorem} \label{stockGOP11}
	Consider a surprisal-minimized  market at a market time $\tau\in\mathcal{T}$.	     	The  benchmark  $S^*_\tau$ represents the equal-weighted factor portfolio and
	satisfies  the SDE
	\BE \label{e.4.1181}
	d S^*_\tau= \hat\lambda_\tau S^*_\tau d\tau+
		S^*_\tau Z_\tau d\tau+S^*_\tau  \sqrt{Z_\tau} dW^*_\tau=\hat\lambda_\tau S^*_\tau d\tau+
	4e^{\tau^*_\tau}a^*_\tau d\tau+\sqrt{S^*_\tau}  \sqrt{4e^{\tau^*_\tau}a^*_\tau} dW^*_\tau, \EE
	with initial value $ S^*_{\tau_{t_0}}>0$ 
	and {\em  squared benchmark volatility} \begin{equation}\label{Zt}
		Z_\tau= \frac{1}{n^2} \sum_{k=1}^{n}\frac{  4a^k}{Y^k_{\tau^k_\tau}},
	\end{equation} 
			where $W^*_\tau$ is  a Brownian motion  with stochastic differential
	\begin{equation}\label{W*t}
		dW^*_\tau= Z_\tau^{-\frac{1}{2}}\frac{1}{n}\sum_{k=1}^{n}\sqrt{\frac{4a^k} {Y^k_{\tau^k_\tau}}}d W^k_\tau
	\end{equation} and initial value $W^*_{\tau_{t_0}}=0$.          
		The {\em normalized benchmark}
	\begin{equation}\label{Y*1}
		Y^*_{\tau^*_\tau}=\frac{ S^*_\tau}{ e^{\tau^*_\tau}}
	\end{equation}  
	follows an SROU process of dimension four   
	that is	evolving in the  {\em benchmark time} 
	\begin{equation}
		\tau^*_\tau=\tau^*_{\tau_{t_0}}+
		\int_{\tau_{t_0}}^{\tau}a^*_sds
	\end{equation}
	and       	 	           	 	
	satisfies the SDE
	\begin{equation}\label{Y^*_t}
		d Y^*_{\tau^*_\tau}=\hat\lambda_\tau Y^*_{\tau^*_\tau}d\tau+\left(4-Y^*_{\tau^*_\tau}\right)a^*_\tau d\tau+2\sqrt{Y^*_{\tau^*_\tau}a^*_\tau}dW^*_\tau\end{equation}
		with initial value 
	\begin{equation}\label{S*initial}
		Y^*_{\tau^*_{\tau_{t_0}}}=S^*_{\tau_{t_0}}e^{-\tau^*_{\tau_{t_0}}}
	\end{equation} and  {\em benchmark activity}          
	\begin{equation}\label{a*Zt11}
		a^*_\tau=\frac{Z_\tau Y^*_{\tau^*_\tau}}{4}.
	\end{equation}
	\end{theorem}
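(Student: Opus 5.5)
We first pin down the weights and the form of the benchmark SDE, and then obtain the rest by a change of variables in time. By Theorem \ref{informationtheorem1} each stationary density $p^k_\infty$ is a gamma density with mean $\frac{4}{n}$. Comparing with the parametrization \eqref{omega}, $4\omega^k=\frac{4}{n}$, so $\omega^k=\frac1n$, which is consistent with \eqref{Nt=1}. Theorem \ref{stockGOP} then gives $\pi^*_\tau=\omega=\frac1n{\bf 1}$, so the benchmark is the equal-weighted factor portfolio.

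Next we substitute $(\beta^k_\tau)^2=4a^k/Y^k_{\tau^k_\tau}$, which follows from \eqref{beta} with $\phi^k(y)=y$, into \eqref{e.4.122a}. The drift term $(\beta_\tau\omega)^\top\beta_\tau\omega$ becomes exactly $Z_\tau$ in \eqref{Zt}. The martingale part $(\beta_\tau\omega)^\top d{\bf W}_\tau=\frac1n\sum_k\sqrt{4a^k/Y^k}\,dW^k_\tau$ equals $\sqrt{Z_\tau}\,dW^*_\tau$ with $W^*$ defined by \eqref{W*t}. On $\mathcal{T}$ all $Y^k>0$, so $Z_\tau$ is finite and positive. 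The quadratic variation of $W^*$ is then $Z_\tau^{-1}\frac1{n^2}\sum_k 4a^k/Y^k\,d\tau=d\tau$. By L\'evy's characterization $W^*$ is a Brownian motion, at least locally on the random time set $\mathcal{T}$; there one works up to the stopping times when some factor hits zero.

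This yields the first form of \eqref{e.4.1181}. The second form is a pure rewriting. Define $a^*_\tau$ by \eqref{a*Zt11} and $Y^*=S^*e^{-\tau^*}$ by \eqref{Y*1}, so that $S^*_\tau Z_\tau=4e^{\tau^*_\tau}a^*_\tau$ and $S^*_\tau\sqrt{Z_\tau}=\sqrt{S^*_\tau}\sqrt{4e^{\tau^*_\tau}a^*_\tau}$. Note that this definition of $a^*$ is implicit but not circular. The quantity $Z_\tau S^*_\tau e^{-\tau^*_\tau}/4$ depends on $\tau^*$, and $\tau^*$ solves $d\tau^*_\tau=a^*_\tau d\tau$. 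Solving for $e^{\tau^*_\tau}$ gives
\[
\frac{d}{d\tau}e^{\tau^*_\tau}=\frac{Z_\tau S^*_\tau}{4},
\]
hence $e^{\tau^*_\tau}=e^{\tau^*_{\tau_{t_0}}}+\int_{\tau_{t_0}}^\tau \frac{Z_sS^*_s}{4}ds$. This is explicit and increasing, so $a^*_\tau>0$ is well defined on $\mathcal{T}$.

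Finally, we apply the It\^o formula to $Y^*_{\tau^*_\tau}=S^*_\tau e^{-\tau^*_\tau}$, where $\tau^*$ is of finite variation with $d\tau^*=a^*d\tau$. This gives $dY^*=e^{-\tau^*}dS^*-Y^*a^*d\tau$. Inserting the second form of \eqref{e.4.1181} gives $\hat\lambda_\tau Y^*d\tau+4a^*d\tau-Y^*a^*d\tau+2\sqrt{Y^*a^*}\,dW^*_\tau$, which is \eqref{Y^*_t}; the initial condition \eqref{S*initial} is immediate. In benchmark time, $\sqrt{a^*_\tau}\,dW^*_\tau$ is the differential of a Brownian motion $\tilde W_{\tau^*}$ by time change. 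The equation therefore reads $dY^*=(4-Y^*)d\tau^*+2\sqrt{Y^*}d\tilde W_{\tau^*}$ plus the linear drift $\hat\lambda Y^*d\tau$, which is the SROU form of dimension four.

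The main obstacle is the last identification. The $\hat\lambda$ term lives in market time rather than benchmark time, so "SROU in benchmark time" must be read with the time-dependent linear coefficient $\hat\lambda_\tau/a^*_\tau$. Alternatively one absorbs $B_\tau$ as in the Corollary to Theorem \ref{informationtheorem1}, and we would follow that convention to state the dimension-four claim. A secondary issue is the careful localization to $\mathcal{T}$ and the random time change.
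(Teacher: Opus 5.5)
Your proof is correct and follows essentially the same route as Appendix D: you substitute $\omega^k=\frac1n$ and $(\beta^k_\tau)^2=4a^k/Y^k_{\tau^k_\tau}$ into \eqref{e.4.122a} to obtain the first form with $Z_\tau$ and $W^*_\tau$, then apply the It\^o formula to $S^*_\tau e^{-\tau^*_\tau}$ and use $Z_\tau=4a^*_\tau/Y^*_{\tau^*_\tau}$. Your additional points are refinements of details the paper leaves implicit, not a different route: L\'evy's characterization of $W^*$, the explicit formula $e^{\tau^*_\tau}=e^{\tau^*_{\tau_{t_0}}}+\int_{\tau_{t_0}}^\tau \frac{Z_sS^*_s}{4}ds$ showing that the implicit definition of $a^*$ is well posed, and the remark that the $\hat\lambda_\tau$ term gives a time-dependent linear coefficient in benchmark time.
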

The normalized benchmark is an  SROU process with dimension four in benchmark time $\tau^*_\tau$, always positive and uniquely solving the SDE \eqref{Y^*_t}. Its squared volatility $Z_\tau$ with respect to market time, as given in  \eqref{Zt}, becomes infinite when any normalized factor reaches zero, causing an upward jump in benchmark time. This highlights that benchmark volatility may spike at certain moments; however, this does not imply that the diffusion coefficient itself becomes infinite. It also suggests the market time-integrated growth rate of the benchmark can have upward jumps when a factor hits zero.
\\
The benchmark is built using a dynamic asset allocation approach. In actual trading, reallocations can only occur at specific intervals since trades are discrete, and the SDEs representing both the factors and the benchmark are theoretical models that assume continuous trading. The continuous-time framework  approximates what typically occurs when factor values become extremely small.\\

 Brownian motion and its Gaussian transition probability density effectively describe the limiting behavior in continuous time for a sum of independently moving particles. This principle underlies models such as Bachelier's \citeN{Bachelier00} approach for sums of independent capital units and the Black-Scholes model by \citeN{BlackSc73}, where it is applied to the logarithm of these sums. Empirical evidence, such as the Student-t log-return distribution observed in stock indices, indicates that these models do not sufficiently represent real market behavior. When analyzing the evolution of aggregated independent capital units, it becomes apparent that their trajectory is influenced by the exact count of units present. Within this framework, the SROU process and its noncentral chi-square transition probability density effectively model the limiting continuous-time dynamics of a system comprising independently moving particles that may generate new entities or cease to exist, as detailed in \citeN{Feller71}. This insight offers an  intuitive perspective on the derived surprisal-minimized market dynamics, framing them as the aggregate activity of independent capital units that either proliferate or disappear autonomously.

\section{Information-Minimized Market }\label{Section6} \setcA \setcB

\subsection{Benchmark-Neutral Pricing}\label{BN}
Hedging an inexpensive zero-coupon bond, which pays one unit of the savings account at maturity, necessitates trading both the num\'eraire and the savings account; for further details, refer to \citeN{FergussonPl23} and  \citeN{BaroneadesiPlSa24}. This requirement implies that, in general, it is essential to be able to trade the num\'eraire used for pricing when hedging contingent claims. Subject to Assumption {\bf{A2}}, real-world pricing may be employed with the NP $S^{**}_\tau$ as the num\'eraire and the real-world probability measure $P$ as the pricing measure, as outlined in \citeN{Platen06ba}. Regarding the derived surprisal-minimized dynamics, risk-neutral pricing can result in prices that are considerably higher than necessary, as demonstrated by \citeN{Platen25a}.\\ Within the current financial market, the NP represents a leveraged portfolio that takes a short position in the savings account. Since practical trading occurs at discrete intervals, a proxy for the NP, an extremely volatile portfolio, cannot be assured to remain strictly positive. Additionally, modelling the dynamics of the NP requires knowledge of the net risk-adjusted return $\hat\lambda_\tau$, a drift parameter process that is particularly challenging to estimate. To address these difficulties, \citeN{Platen25a} introduced the benchmark-neutral (BN) pricing method, which utilizes the benchmark $S^*_\tau$ as the num\'eraire and the corresponding BN pricing measure $Q_{S^*}$ as the pricing measure. The BN pricing measure $Q_{S^*}$  is specified by 
the  {\em  Radon-Nikodym derivative} 
\begin{equation}\label{lambda1}
	\Lambda_{Q_{{{S^{*}}}}}(\tau)=\frac{dQ_{ { S^{*}}}}{dP}\Big |_{\mathcal{F}_{\tau}}=\frac{\frac{ { S^{*}_\tau}}{S^{**}_\tau}}{\frac{ { S^{*}_{\tau_{t_0}}}}{S^{**}_{\tau_{t_0}}}}
\end{equation}
for $\tau\in[\tau_{t_0},\infty)$;  see \citeN{GemanElRo95}. \\

Let ${L}_Q^1(\mathcal{F}_{\tau_T})$
denote the set of $Q$-integrable, $\mathcal{F}_{\tau_T}$-measurable random variables in a  filtered probability space $(\Omega,\mathcal{F},\underline{\cal{F}},Q)$. It is shown in \citeN{Platen26a} for a more  general financial market model than the above derived surprisal-minimized market model  that the BN pricing measure $Q_{S^*}$ is an equivalent probability measure and BN pricing determines the minimal possible  prices for replicable contingent claims. More precisely, one can conclude directly from Theorem 3.4 in \citeN{Platen26a} the following facts:
	\begin{corollary}
	\label{ABNPricingformula}
For a surprisal-minimized  market with  $n>2$ and $\tau\in[\tau_{t_0},\infty)$, where  the absolute value of the  net   risk-adjusted return $\hat\lambda_\tau$ is constant and finite,  the following statements hold:\\
	(i) The  BN pricing measure $Q_{S^*}$ is an equivalent  probability measure. \\
	(ii) For a replicable contingent claim  $ H_{\tau_T}$, with  stopping time $\tau_T$ that is bounded as a market time, where $\frac{H_{\tau_T}}{S^{**}_{\tau_T}}\in L^1_P(\mathcal{F}_{\tau_T})$, and $\frac{H_{\tau_T}}{S^{*}_{\tau_T}}\in L^1_{Q_{S^*}}(\mathcal{F}_{\tau_T})$,
	its minimal possible  price $H_\tau$ at the market time $\tau\in[\tau_{t_0},\tau_T]$ equals the BN price 
	provided by  the {\em benchmark-neutral pricing formula} 
	\BE \label{BPF}H_\tau =S^*_\tau {\bf E}^{Q_{S^*}}\left(\frac{H_{\tau_T}}{S^*_{\tau_T}}|\mathcal{F}_{\tau}\right).
	\EE
		(iii)  For $k\in\{1,...,n\}$, the process $\bar W^{k}=\{\bar W^{k}_\tau, \tau\in[\tau_{t_0},\infty)\}$,
	satisfying the SDE
	\begin{equation}\label{AW0}
		d\bar W^{k}_\tau
		=  \hat\lambda_\tau\sqrt{\frac{S^k_{\tau}}{4a^k_\tau B_\tau e^{\tau^k_\tau} }}d\tau+dW^k_\tau
	\end{equation}
	for $\tau\in[\tau_{t_0},\infty)$ with $\bar W^{k}_{\tau_{t_0}}=0$, is under the BN pricing measure $Q_{S^*}$ a Brownian motion in market time.\\
	(iv) The benchmark satisfies the SDE
	\begin{equation} \label{Ae.4.1192}
		d S^*_\tau= 
			Z_\tau S^*_\tau d\tau+ \sqrt{ Z_\tau  } S^*_\tau d\bar W^*_\tau 
			= 	4 e^{\tau^*_\tau}a^*_\tau d\tau+ \sqrt{ S^*_\tau  4e^{\tau^*_\tau}a^*_\tau  } d\bar W^*_\tau
		\end{equation}
		of a time-transformed squared Bessel process of dimension four,	where  $\bar W^{*}=\{\bar W^{*}_\tau, \tau\in[\tau_{t_0},\infty)\}$ is  under the BN pricing measure $Q_{S^*}$ a Brownian motion in market time,
		satisfying the SDE
		\begin{equation}\label{AW04}
			d\bar W^{*}_\tau
			= Z_\tau^{-\frac{1}{2}}\hat\lambda_\tau d\tau+dW^*_\tau
		\end{equation}
		for $\tau\in[\tau_{t_0},\infty)$ with $Z_\tau$ given in \eqref{Zt} and $\bar W^{*}_{\tau_{t_0}}=0$.
			\end{corollary}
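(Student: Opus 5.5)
The plan is to derive everything from the Radon--Nikodym density $\Lambda_{Q_{S^*}}(\tau)=\frac{S^*_\tau/S^{**}_\tau}{S^*_{\tau_{t_0}}/S^{**}_{\tau_{t_0}}}$. Its dynamics follow from Theorem \ref{stockGOP} and Theorem \ref{GOPentire} by the It\^o quotient rule. Because $\pi^*_\tau=\omega$, the benchmark volatility vector is $\beta_\tau\omega$, while the NP volatility vector is $\theta^{**}_\tau=\hat\lambda_\tau\beta_\tau^{-1}{\bf 1}+\beta_\tau\omega$. Hence
\begin{equation*}
\frac{d\Lambda_{Q_{S^*}}(\tau)}{\Lambda_{Q_{S^*}}(\tau)}=(\beta_\tau\omega-\theta^{**}_\tau)^\top d{\bf W}_\tau=-\hat\lambda_\tau({\bf 1}^\top\beta_\tau^{-1})\,d{\bf W}_\tau ,
\end{equation*}
so $\Lambda_{Q_{S^*}}$ is a stochastic exponential whose drift vanishes. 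This drift cancellation is exactly the statement that $S^*/S^{**}$ is a local martingale under $P$, as it must be since $S^{**}$ is the NP. In the surprisal-minimized market, Theorem \ref{informationtheorem1} and \eqref{beta} give $\beta^k_\tau=\sqrt{4a^k/Y^k_{\tau^k_\tau}}$. Using \eqref{1Akt2}, the integrand $\hat\lambda_\tau/\beta^k_\tau$ therefore equals $\hat\lambda_\tau\sqrt{S^k_\tau/(4a^kB_\tau e^{\tau^k_\tau})}$.

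Statement (i) is the main obstacle: we must show that $\Lambda_{Q_{S^*}}$ is a true martingale, not merely a positive local martingale. Since the paper says (i)--(iv) follow from Theorem 3.4 in \citeN{Platen26a}, I would check its hypotheses and, for a self-contained argument, proceed as follows. With $\hat\lambda$ constant, the exponent $\hat\lambda^2\int\sum_k Y^k/(4a^k)\,ds$ involves only the CIR processes $Y^k$. I would then verify the martingale property through the explosion criterion instead of Novikov, since exponential moments of integrated CIR processes are finite only for small parameters. Concretely, Girsanov applied to the candidate measure shifts the drift of each $Y^k$ by $-2\hat\lambda\,Y^k\,(\cdot)$ terms. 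Under the new measure each $Y^k$ is still a CIR-type process with a linear (affine) drift, so it does not explode. The martingale property of $\Lambda_{Q_{S^*}}$ up to any bounded market time then follows by the standard Feller/Mijatovi\'c--Urusov argument. The condition $n>2$ enters here: the dimension $4/n<2$ makes the factors hit zero, and the boundary behaviour at zero has to be handled with reflection. I expect this step to need the most care, and I would cite \citeN{Platen26a} for it.

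Given (i), statement (iii) is Girsanov's theorem. The density has integrand $-\hat\lambda_\tau/\beta^k_\tau$ on $W^k$, so $\bar W^k_\tau=W^k_\tau+\int\hat\lambda_s/\beta^k_s\,ds$ is a $Q_{S^*}$-Brownian motion, and this gives \eqref{AW0} with the expression computed above. Statement (iv) follows by substituting $d{\bf W}_\tau=d\bar{\bf W}_\tau-\hat\lambda_\tau\beta_\tau^{-1}{\bf 1}\,d\tau$ into \eqref{e.4.122a}. The drift $\hat\lambda_\tau\,d\tau$ cancels against $-(\beta_\tau\omega)^\top\hat\lambda_\tau\beta_\tau^{-1}{\bf 1}\,d\tau=-\hat\lambda_\tau\omega^\top{\bf 1}\,d\tau=-\hat\lambda_\tau\,d\tau$ by \eqref{Nt=1}. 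This leaves $dS^*_\tau=S^*_\tau(Z_\tau\,d\tau+\sqrt{Z_\tau}\,d\bar W^*_\tau)$. Here $\bar W^*$ is defined through \eqref{W*t} with $\bar W^k$ in place of $W^k$, and by L\'evy's characterization it is a $Q_{S^*}$-Brownian motion. Its relation to $W^*$ is \eqref{AW04}, because $Z_\tau^{-1/2}\frac1n\sum_k\beta^k_\tau\cdot\hat\lambda_\tau/\beta^k_\tau=Z_\tau^{-1/2}\hat\lambda_\tau$. The second form of \eqref{Ae.4.1192} then follows from \eqref{a*Zt11} and \eqref{Y*1}, exactly as in Theorem \ref{stockGOP11}.

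Statement (ii) follows from (i) by the change of num\'eraire. For a replicable claim with self-financing hedge value $V$, the ratio $V/S^{**}$ is a nonnegative $P$-local martingale, hence a supermartingale. Under the integrability assumption $H_{\tau_T}/S^{**}_{\tau_T}\in L^1_P$, the minimal replicating price is the one for which $V/S^{**}$ is a true $P$-martingale; this is the real-world pricing result of \citeN{Platen06ba}. Bayes' formula with $\Lambda_{Q_{S^*}}$ then rewrites ${\bf E}^P(H_{\tau_T}S^{**}_\tau/S^{**}_{\tau_T}\,|\,\mathcal F_\tau)$ as \eqref{BPF}, and the $Q_{S^*}$-integrability assumption guarantees that the right-hand side is well defined.
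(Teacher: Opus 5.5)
The paper does not argue (i)--(iv) itself; it appeals directly to Theorem 3.4 of \citeN{Platen26a}, which is proved there for a more general market. Your proposal instead reconstructs the result inside the present model, and it is correct up to the minor points below.

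\begin{itemize}
\item The quotient rule with Theorems \ref{stockGOP} and \ref{GOPentire} gives $d\Lambda_{Q_{S^*}}(\tau)/\Lambda_{Q_{S^*}}(\tau)=-\hat\lambda_\tau\sum_k(\beta^k_\tau)^{-1}dW^k_\tau$. This is exactly the product form \eqref{lambda}--\eqref{LambdaQS*a} that the paper later uses in Appendix E.
\item Girsanov then gives (iii) with the integrand $\hat\lambda_\tau\sqrt{Y^k_{\tau^k_\tau}/(4a^k)}$.
\item Substituting into \eqref{e.4.122a}, using $Z_\tau=|\beta_\tau\omega|^2$ and L\'evy's characterization, gives (iv) together with \eqref{AW04}.
\item Real-world pricing plus Bayes' rule gives (ii).
\end{itemize}

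What your route buys is transparency. It shows that both the vanishing drift of $\Lambda_{Q_{S^*}}$ and the disappearance of $\hat\lambda_\tau$ from the $Q_{S^*}$-dynamics of $S^*$ rest on $\omega^\top{\bf 1}=1$. It also reduces (i) to a non-explosion question for independent CIR processes. Your sketch for that step is in fact sufficient, so the fallback citation is not needed. The citation, by contrast, buys brevity and validity beyond the surprisal-minimized model.

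Three points to tighten.

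First, the Girsanov shift in the drift of $Y^k$ is $-\hat\lambda Y^k_{\tau^k_\tau}\,d\tau$, not $-2\hat\lambda Y^k$. Under $Q_{S^*}$ one has $dY^k_{\tau^k_\tau}=\bigl(\frac{4a^k}{n}-(a^k+\hat\lambda)Y^k_{\tau^k_\tau}\bigr)d\tau+\sqrt{4a^kY^k_{\tau^k_\tau}}\,d\bar W^k_\tau$. The dimension $4/n$, and hence the behaviour at zero, is unchanged, and the kernel is bounded near zero (it even vanishes there). So the localization and non-explosion argument, combined with the product form and independence, settles (i) for every $n$. The hypothesis $n>2$ does not ``enter'' there, contrary to your remark.

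Second, zero-hitting does matter in (iv). When a factor is zero, $Z_\tau=\infty$, so your substitution into \eqref{e.4.122a} is only justified on $\mathcal{T}$, as in Theorem \ref{stockGOP11}. Since $4/n<2$, these times actually occur, so the restriction is not vacuous and you should state it.

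Third, the hypothesis only fixes $|\hat\lambda_\tau|$. If the sign of $\hat\lambda_\tau$ can change adaptively, the time-homogeneous explosion argument and the independence of the $\Lambda^k$ are no longer available as stated. Novikov's criterion, applied on a partition into short intervals, still works:
\begin{itemize}
\item The exponent $\frac{1}{2}\hat\lambda_\tau^2\sum_kY^k_{\tau^k_\tau}/(4a^k)$ does not depend on the sign.
\item The stationary gamma laws have exponential moments of every order below $\frac12$.
\item The Laplace exponent of $\int_s^{s+h}Y^k_{\tau^k_u}du$ tends to zero with $h$, uniformly in $s$ by stationarity.
\end{itemize}
So, contrary to your comment, Novikov need not be avoided.
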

	The BN pricing measure $Q_{S^*}$ substitutes in the  SDEs  of the factors under the real-world probability measure the net risk-adjusted return  $\hat\lambda_\tau$   by zero, and the processes $W^1_\tau,...,W^n_\tau$ by respective processes  $\bar W^1_\tau,...,\bar W^n_\tau$, which are  Brownian motions under the BN pricing measure. This means, the knowledge of the net risk-adjusted return $\hat\lambda_\tau$ is not needed for BN pricing and hedging. The   dimensions of the  SROU processes that form the   normalized factors, factors,  and the  benchmark remain under the BN pricing measure $Q_{S^*}$ the same as under the real-world probability measure $P$. \\

 In a wide class of potential num\'eraires the benchmark has been shown in \citeN{Platen26a} to be the only portfolio that yields as num\'eraire an equivalent pricing measure. In \citeN{SchmutzPlSc25} BN pricing has been generalized for not fully replicable contingent claims by introducing the concept of BN risk-minimization.

\subsection{Information Minimization}	

 We denote for the normalized factors the BN stationary joint density by the product
\begin{equation}q_\infty=\prod_{k=1}^{n}q^k_\infty\end{equation} and the real-world stationary joint density by \begin{equation}p_\infty=\prod_{k=1}^{n}p^k_\infty.\end{equation}  
A trade at the market time $\tau\in[\tau_{t_0},\infty)$ reveals information through its price. Trades are discrete in time and the   trading intensities, which quantify the speed of the flow of  price information,  are modelled by the market activities in the surprisal-minimized   market.   
By using \citeN{Kullback59}, we employ the following notion:

\begin{definition} For    a surprisal-minimized market with BN  stationary joint density $q_\infty$ and real-world stationary  joint density $p_\infty$ of the normalized factors at the market time $\tau\in[\tau_{t_0},\infty)$, the {\em Kullback-Leibler divergence} $I(p_\infty,	q_\infty)$ of  $q_\infty$ from  $p_\infty$  is defined as
	\begin{equation}\label{Ipq1}
		I(p_\infty,	q_\infty)=\frac{d }{d\tau}{\bf{E}}^{p_\infty}\left(-\ln(\Lambda_{Q_{S^*}}(\tau))\right).
	\end{equation}
	. 
\end{definition} 

 The above Kullback-Leibler divergence measures  the increase of expected information content when employing BN pricing. One notices by \eqref{Ipq1} and \eqref{lambda1}  that the Kullback-Leibler divergence equals 
 the average growth rate of the NP  $S^{**}_\tau$ denominated in units of the benchmark $S^*_\tau$. 	\\

The Assumption {\bf{A4}} postulates that the increase of expected information content from BN pricing is minimized in the market, which means that  the Kullback-Leibler divergence $	I(p_\infty,	q_\infty)$ is minimized. This minimization involves our fourth Lagrangian. % required by the Noether method.
\begin{definition}
	We call a surprisal-minimized market {\em  information-minimized} when  the above  Kullback-Leibler divergence of $q_\infty$ from $p_\infty$ is minimized.
\end{definition}       	   	           	    The results of the above minimization   are described in the following theorem, where the proof is provided in Appendix E: 

\begin{theorem}\label{informationtheorem} For an information-minimized market, it follows for $k\in\{1,...,n\}$ that the $k$-th activity equals
\begin{equation}\label{act}
	a^k=1,
\end{equation}
and the net risk-adjusted return
\begin{equation}\label{interestrate}
	\hat\lambda_\tau	=\hat\lambda
\end{equation}  equals a  constant,	           	    
where the divergence of $q_\infty$ from $p_\infty$ amounts to
\begin{equation}\label{infoI}
	I(p_\infty,	q_\infty)=\frac{\hat\lambda^2}{2}
\end{equation}	           	    for $\tau\in[\tau_{t_0},\infty)$. 

\end{theorem}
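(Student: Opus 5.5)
The plan is to compute the Kullback-Leibler divergence \eqref{Ipq1} explicitly as the average drift of $-\ln(\Lambda_{Q_{S^*}}(\tau))$. I would then recognise the minimisation over the activities as a quadratic problem under the constraint \eqref{averactivity}, and treat the net risk-adjusted return separately.

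\emph{Computing the divergence rate.} By \eqref{lambda1}, $-\ln(\Lambda_{Q_{S^*}}(\tau))$ equals $\ln(S^{**}_\tau/S^*_\tau)$ up to an additive $\mathcal{F}_{\tau_{t_0}}$-measurable constant, so only its drift matters.
\begin{itemize}
\item From Theorem \ref{stockGOP}, the benchmark has market price of risk vector $\beta_\tau\omega$.
\item From Theorem \ref{GOPentire}, the NP has market price of risk vector $\theta^{**}_\tau=\hat\lambda_\tau\beta_\tau^{-1}{\bf 1}+\beta_\tau\omega$.
\item The It\^o formula then gives
$$d\ln\Big(\frac{S^{**}_\tau}{S^*_\tau}\Big)=\tfrac12\big|\theta^{**}_\tau-\beta_\tau\omega\big|^2d\tau+(\theta^{**}_\tau-\beta_\tau\omega)^\top d{\bf W}_\tau
=\tfrac12\hat\lambda_\tau^2\,{\bf 1}^\top\beta_\tau^{-2}{\bf 1}\,d\tau+\hat\lambda_\tau{\bf 1}^\top\beta_\tau^{-1}d{\bf W}_\tau .$$
\item In a surprisal-minimized market, Theorem \ref{informationtheorem1} and \eqref{beta} give $(\beta^k_\tau)^{-2}=Y^k_{\tau^k_\tau}/(4a^k)$.
\end{itemize}
The stochastic integral has zero expectation under the stationary densities; I would check this via square integrability of $\hat\lambda_\tau$ and the finite gamma moments. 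Then
$$I(p_\infty,q_\infty)=\tfrac12\,{\bf E}^{p_\infty}\Big(\hat\lambda_\tau^2\sum_{k=1}^n\frac{Y^k_{\tau^k_\tau}}{4a^k}\Big).$$

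\emph{Minimising over the activities.} By Theorem \ref{informationtheorem1} and \eqref{omega}, $p^k_\infty$ is gamma with mean $4/n=4\omega^k$, so $\omega^k=1/n$. If $\hat\lambda_\tau$ could be decoupled from the $Y^k$, the divergence would reduce to $\tfrac12{\bf E}(\hat\lambda_\tau^2)\sum_{k=1}^n\frac1n\frac1{a^k}$. With $x_k=\sqrt{1/a^k}$, the constraint \eqref{averactivity} reads $\sum_k\frac1n x_k=1$. Jensen's inequality (or Cauchy--Schwarz) then gives $\sum_k\frac1n x_k^2\ge(\sum_k\frac1n x_k)^2=1$, with equality iff all $x_k=1$, i.e.\ $a^k=1$. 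This is exactly the fourth Lagrangian: minimise $\sum_k\omega^k/a^k$ subject to $\sum_k\omega^k(a^k)^{-1/2}=1$. Its stationarity condition $-\omega^k(a^k)^{-2}+\frac{\mu}{2}\omega^k(a^k)^{-3/2}=0$ forces all $a^k$ equal, and the constraint then fixes them at $1$.

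\emph{The net risk-adjusted return.} This is the step I expect to be the main obstacle: the decoupling above, and the constancy claim, need an argument, since $\hat\lambda_\tau$ is only an adapted square integrable process. I would first use that the divergence is defined as a rate with respect to the stationary densities. It must therefore be time-invariant, which forces ${\bf E}(\hat\lambda_\tau^2)$ to be constant. Next, for fixed average level $\hat\lambda={\bf E}(\hat\lambda_\tau)$, Jensen gives ${\bf E}(\hat\lambda_\tau^2)\ge\hat\lambda^2$, with equality iff $\hat\lambda_\tau\equiv\hat\lambda$; so minimisation of the divergence forces $\hat\lambda_\tau$ to be constant. Alternatively, I would argue that any dependence of $\hat\lambda_\tau$ on the state variables only adds a nonnegative variance term. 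Once $\hat\lambda_\tau=\hat\lambda$ and $a^k=1$, the identity ${\bf E}^{p_\infty}(\sum_k Y^k/4)=n\cdot\frac{4/n}{4}=1$ yields $I(p_\infty,q_\infty)=\hat\lambda^2/2$, which is \eqref{infoI}.
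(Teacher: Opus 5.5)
Your computation of the divergence and your minimisation over the activities match the paper's Appendix E. The paper factorises $\Lambda_{Q_{S^*}}$ into independent exponential martingales with integrands $\sqrt{\hat\lambda^2Y^k_{\tau^k_s}/(4a^k)}$ rather than applying the It\^o formula to $\ln(S^{**}_\tau/S^*_\tau)$. It nevertheless reaches the same expression $\frac12\sum_k{\bf E}^{p_\infty}(\hat\lambda_\tau^2Y^k_{\tau^k_\tau}/(4a^k))$, and its variance decomposition under \eqref{averactivity} is exactly your Jensen/Cauchy--Schwarz step.

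The gap is in the constancy step. The paper does not minimise over $\hat\lambda_\tau$ at all. It keeps $\hat\lambda_\tau^2$ outside ${\bf E}^{p_\infty}$: the average runs only over the stationary law of the normalized factors, and $\hat\lambda_\tau$ is treated as a parameter. After setting $a^k=1$ this gives $I(p_\infty,q_\infty)=\hat\lambda_\tau^2/2$. Time-invariance of $p_\infty,q_\infty$ then yields $\hat\lambda_\tau=\sqrt{2I(p_\infty,q_\infty)}$ constant at once. Your first argument is this one. Under the paper's convention it is all that is needed, and the decoupling you worry about is precisely that convention.

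Your supplementary Jensen argument would fail, for two reasons.
\begin{itemize}
\item It needs a fixed mean ${\bf E}(\hat\lambda_\tau)=\hat\lambda$ that the model does not supply. Without it, minimising over the law of $\hat\lambda_\tau$ just drives $\hat\lambda_\tau$ to $0$.
\item The inequality ${\bf E}(\hat\lambda_\tau^2X)\ge({\bf E}\hat\lambda_\tau)^2\,{\bf E}(X)$, with $X=\sum_kY^k_{\tau^k_\tau}/(4a^k)$, requires $\hat\lambda_\tau$ to be independent of the normalized factors.
\end{itemize}
Your alternative claim that state dependence ``only adds a nonnegative variance term'' is false, because a negative covariance between $\hat\lambda_\tau^2$ and $X$ lowers the divergence. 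Concretely, take $a^k=1$. Then $X=Y^{FP}_\tau/4$, where $Y^{FP}_\tau$ is gamma with four degrees of freedom and mean $4$, so ${\bf E}(X)=1$, ${\bf E}(X^2)=\frac32$ and ${\bf E}(X^3)=3$. The adapted, square integrable choice $\hat\lambda_\tau=\hat\lambda(1+\varepsilon(1-X))$ has mean $\hat\lambda$ and a time-invariant law, yet
\[
I(p_\infty,q_\infty)=\frac{\hat\lambda^2}{2}\left(1-\varepsilon+\varepsilon^2\right)<\frac{\hat\lambda^2}{2}\qquad\mbox{for }0<\varepsilon<1 .
\]
So under your fixed-mean formulation the minimiser is not constant.

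Drop this step. State explicitly, as the paper does implicitly, that $\hat\lambda_\tau$ enters the stationary average as a parameter. Time-invariance alone then yields \eqref{interestrate} and \eqref{infoI}.
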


For an information-minimized market the activities equal $1$ and the net risk-adjusted return $\hat\lambda_\tau=\hat\lambda$ is a constant. As indicated earlier we name the resulting model as follows:
\begin{definition}
	The idealized financial market model that results from the application of the four assumptions {\bf{A1}}, {\bf{A2}}, {\bf{A3}},and {\bf{A4}}, yielding an information-minimized market, is called {\em minimal market model} (MMM).
	\end{definition}

Under the MMM and for  $k\in\{1,...,n\}$, the $k$-th factor 
\begin{equation}\label{Akt2}  S^k_{\tau}=Y^k_{\tau}B_\tau e^{\tau},
\end{equation}   has by \eqref{hatB11j21},  \eqref{Yktayt1}, and application of the It\^o formula, the following properties:

\begin{corollary}\label{corolvarphiAk}
For   
$k\in\{1,...,n\}$, the  $k$-th factor has under the MMM the dynamics  of  an  SROU  process with {\em dimension} $d_k=\frac{4}{n}$. %,
It 
satifies the SDE
\begin{equation}\label{Akt1} d   S^k_{\tau}= \hat\lambda S^k_\tau d\tau+
\frac{4}{n}B_\tau e^{\tau} d\tau+\sqrt{ S^k_\tau}\sqrt{4B_\tau e^{\tau}}dW^k_\tau   
\end{equation}
for $\tau \in[\tau_{t_0},\infty)$,	 with             	          	  random  initial value $ S^k_{\tau_{t_0}}=Y^k_{\tau_{t_0}}e^{\tau_{t_0}}$, where $Y^k_{\tau_{t_0}}$ has the stationary probability density $p^k_{\tau_{t_0}}=p^k_\infty$ 
as its density.

\end{corollary}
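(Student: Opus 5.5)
The plan is to combine the earlier results for the MMM and verify the SDE by the It\^o product rule. By Theorem \ref{informationtheorem}, an information-minimized market has
\[
a^k=1 \quad\text{and}\quad \hat\lambda_\tau=\hat\lambda \ \text{(constant)}.
\]
Then \eqref{bartaukt} gives $\tau^k_\tau=\tau$ for all $k$, using $\tau_{t_0}$ as the initial time. The basis exponential \eqref{basis} becomes $B_\tau=e^{\hat\lambda(\tau-\tau_{t_0})}$, which satisfies $dB_\tau=\hat\lambda B_\tau d\tau$. Theorem \ref{informationtheorem1} then says the normalized factor $Y^k_\tau$ solves \eqref{dY21} with $a^k=1$:
\[
dY^k_\tau=\left(\tfrac{4}{n}-Y^k_\tau\right)d\tau+\sqrt{4Y^k_\tau}\,dW^k_\tau .
\]
Its stationary density $p^k_\infty$ is gamma, with initial value distributed according to $p^k_\infty$. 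The representation \eqref{Akt2} is just \eqref{Yktayt1} rearranged with $\tau^k_\tau=\tau$.

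The main step is to apply the It\^o product rule to $S^k_\tau=Y^k_\tau\,C_\tau$, where $C_\tau=B_\tau e^{\tau}$. The factor $C_\tau$ is of finite variation with
\[
dC_\tau=(\hat\lambda+1)C_\tau\,d\tau,
\]
so there is no covariation term. Hence
\[
dS^k_\tau=C_\tau\,dY^k_\tau+Y^k_\tau(\hat\lambda+1)C_\tau\,d\tau .
\]
Substituting the SDE for $Y^k$:
\begin{itemize}
\item The $-Y^k_\tau C_\tau\,d\tau$ drift cancels the $+Y^k_\tau C_\tau\,d\tau$ term.
\item What remains of the drift is $\hat\lambda S^k_\tau\,d\tau+\frac{4}{n}B_\tau e^\tau\,d\tau$.
\item The diffusion term is $C_\tau\sqrt{4Y^k_\tau}\,dW^k_\tau$. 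Writing $Y^k_\tau=S^k_\tau/C_\tau$ turns it into $\sqrt{S^k_\tau}\sqrt{4B_\tau e^\tau}\,dW^k_\tau$.
\end{itemize}
This is exactly \eqref{Akt1}. Alternatively, one can specialize the earlier corollary's SDE \eqref{Akt} by setting $a^k=1$ and $\hat\lambda_\tau=\hat\lambda$, which gives the same equation directly.

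Finally, I would justify calling this an SROU process of dimension $d_k=\frac{4}{n}$. An SROU process has the form
\[
dX=(\delta c_\tau+\lambda_\tau X)\,d\tau+2\sqrt{c_\tau X}\,dW .
\]
Matching with \eqref{Akt1} gives $c_\tau=B_\tau e^\tau$, $\lambda_\tau=\hat\lambda$ and $\delta=4/n$. Existence and uniqueness of a strong solution, with instantaneous reflection at zero, follow from the corresponding property of $Y^k$ established in Theorem \ref{informationtheorem1} together with the deterministic, smooth, positive transformation $C_\tau$. The initial value is $S^k_{\tau_{t_0}}=Y^k_{\tau_{t_0}}e^{\tau_{t_0}}$, since $B_{\tau_{t_0}}=1$.

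I expect no serious obstacle; the argument is a direct computation. The only delicate points are bookkeeping. First, the initial activity time must be identified consistently, so that $\tau^k_\tau=\tau$ exactly. Second, the square-root diffusion coefficient should be rewritten in terms of $S^k_\tau$ only after the product rule has been applied.
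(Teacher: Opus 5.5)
Your proposal is correct and follows essentially the paper's route. You specialize to $a^k=1$ and constant $\hat\lambda$ via Theorem~\ref{informationtheorem}, so that $\tau^k_\tau=\tau$ and $S^k_\tau=Y^k_\tau B_\tau e^{\tau}$, and then obtain \eqref{Akt1} by the It\^o formula. The paper cites \eqref{hatB11j21} and \eqref{Yktayt1}, i.e.\ it substitutes $\beta^k_\tau=\sqrt{4/Y^k_\tau}$, $\omega^k=\frac1n$ and $S^k_\tau/Y^k_\tau=B_\tau e^{\tau}$, which is the same computation as your product-rule bookkeeping. Your drift cancellation, the matching of SROU coefficients, and the initial value $S^k_{\tau_{t_0}}=Y^k_{\tau_{t_0}}e^{\tau_{t_0}}$ via $B_{\tau_{t_0}}=1$ are all correct.
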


\citeN{CraddockPl04} analyze Lie group symmetries for PDEs governing diffusion dynamics with square-root state variable diffusion coefficients. For the normalized factors in \eqref{Akt1}, Theorem 4.4.3 in \citeN{BaldeauxPl13} identifies a specific Lie group symmetry aligning with these dynamics. This symmetry provides an explicit formula for the transition probability density, which is shown to be that of an SROU process; see Equation 5.1.2 in\citeN{BaldeauxPl13}. Therefore, the transition probability density for normalized factor dynamics relates directly to a corresponding Lie group symmetry of its governing PDE.
\\ 
Noether's Theorems in \citeN{Noether18} predict that when a system of partial differential equations for state variables has a Lie group symmetry, a conservation law exists. In an information-minimized market, this conservation law appears as Equation \eqref{interestrate}, which means the net risk-adjusted return remains constant. This makes economic sense and enhances our understanding of how economies work: The parameter $\hat\lambda$, a Lagrange multiplier, can be seen as the average extra net return a firm needs above the interest rate, assuming its business activity is risk-free. If firms earn less than  $\hat\lambda$, their operations would quickly become unprofitable. Conversely, if the average extra net return exceeds  $\hat\lambda$, new competitors would enter the market, driving down returns until they settle at the natural constant  $\hat\lambda$.

\subsection{Additivity Property of Sums of Independent Factors}\label{sumsoffactors}

The sum of independent squared Bessel processes evolving in the same time forms another squared Bessel process, as shown by \citeN{ShigaWa73}. This distinctive additivity arises from the specific partial differential equation governing the transition probability density of the squared Bessel process. These processes are particular types of SROU processes; see \citeN{RevuzYo99}. By applying Corollary \ref{corolvarphiAk}, the following result is derived:

\begin{corollary}\label{sumofatoms}
For  
a set $\mathcal{A}\subseteq \{1,...,n\}$ of indexes of independent factors,  the respective {\em sum of independent factors }
\begin{equation}
S^{\mathcal{A}}_{\tau} =\sum_{k\in\mathcal{A}}  S^k_\tau         \end{equation} satisfies under the MMM the SDE
\begin{equation}\label{Akphi1}
d  S^{\mathcal{A}}_{\tau}
= \hat\lambda S^{\mathcal{A}}_{\tau}d\tau+
d_{\mathcal{A}}B_\tau e^{\tau} d\tau+\sqrt{S^{\mathcal{A}}_{\tau}4 B_\tau e^{\tau}}d W^{\mathcal{A}}_\tau\end{equation}
of an SROU   process 
with net risk-adjusted return $\hat\lambda$,  dimension 
\begin{equation}
d_{\mathcal{A}}=\sum_{k\in\mathcal{A}}\frac{4}{n},
\end{equation}
and	initial value
\begin{equation}
S^{\mathcal{A}}_{\tau_{t_0}} =\sum_{k\in\mathcal{A}}  S^k_{\tau_{t_0}}=\sum_{k\in\mathcal{A}}  Y^k_{\tau_{t_0}}e^{\tau_{t_0}} ,
\end{equation}
where $ W^{\mathcal{A}}_\tau$ is a Brownian motion with stochastic differential
\begin{equation}
d W^{\mathcal{A}}_\tau=\frac{1}{\sqrt{  S^{\mathcal{A}}_{\tau} }}\sum_{k\in\mathcal{A}} \sqrt{ S^k_{\tau}}d W^k_\tau
\end{equation}
for $\tau\in[\tau_{t_0},\infty)$ with initial value $ W^{\mathcal{A}}_{\tau_{t_0}}=0$.
The respective {\em normalized sum of factors} 
\begin{equation}\label{YAP}
Y^{\mathcal{A}}_{\tau^{\mathcal{A}}_\tau}=\frac{	S^{\mathcal{A}}_{\tau}}{B_\tau e^{\tau}}
=\sum_{k \in {\mathcal{A}}}Y^k_{\tau^k_\tau}
\end{equation}
satisfies the SDE
\begin{equation}
\label{Ykappa5}
dY^{\mathcal{A}}_{\tau}	=\left(d_{\mathcal{A}} -Y^{\mathcal{A}}_{\tau}\right)d\tau+\sqrt{4  Y^{\mathcal{A}}_{\tau}} d  W^{\mathcal{A}}_\tau
\end{equation}
and evolves as a CIR process of dimension $d_{\mathcal{A}}$ in the market time 
$\tau\in[\tau_{t_0},\infty)$. 
\end{corollary}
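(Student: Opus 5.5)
The plan is to sum the factor SDEs of Corollary \ref{corolvarphiAk} over $k\in\mathcal{A}$, recombine the independent noise terms into a single Brownian motion using L\'evy's characterization, and then normalize by $B_\tau e^{\tau}$ via the It\^o formula.

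First, since the SDE \eqref{Akt1} is linear in drift and each diffusion coefficient is $\sqrt{S^k_\tau}\sqrt{4B_\tau e^\tau}$ with a common factor $\sqrt{4B_\tau e^\tau}$, summing over $k\in\mathcal{A}$ gives
\begin{equation*}
dS^{\mathcal{A}}_\tau=\hat\lambda S^{\mathcal{A}}_\tau d\tau+\sum_{k\in\mathcal{A}}\frac{4}{n}B_\tau e^\tau d\tau+\sqrt{4B_\tau e^\tau}\sum_{k\in\mathcal{A}}\sqrt{S^k_\tau}\,dW^k_\tau .
\end{equation*}
The drift already has the form claimed in \eqref{Akphi1}, with $d_{\mathcal{A}}=\sum_{k\in\mathcal{A}}\frac4n$.

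Second, the noise term is rewritten as $\sqrt{S^{\mathcal{A}}_\tau 4B_\tau e^\tau}\,dW^{\mathcal{A}}_\tau$, with $dW^{\mathcal{A}}_\tau$ as stated. The process $W^{\mathcal{A}}$ is a continuous local martingale started at zero. Because the $W^k$ are independent, its quadratic variation is
\begin{equation*}
d[W^{\mathcal{A}}]_\tau=\frac{1}{S^{\mathcal{A}}_\tau}\sum_{k\in\mathcal{A}}S^k_\tau\, d\tau=d\tau .
\end{equation*}
L\'evy's characterization then makes $W^{\mathcal{A}}$ a Brownian motion in market time.

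The main obstacle is the set of times when $S^{\mathcal{A}}_\tau=0$, where the integrand $1/\sqrt{S^{\mathcal{A}}_\tau}$ is undefined. Two cases arise.
\begin{itemize}
\item If $d_{\mathcal{A}}\ge 2$, the sum stays strictly positive, so the issue does not occur.
\item Otherwise, I would define $dW^{\mathcal{A}}_\tau$ with the integrand replaced by an independent auxiliary Brownian increment on the set $\{S^{\mathcal{A}}_\tau=0\}$. That set has Lebesgue measure zero because the process is instantaneously reflected at zero, so neither the SDE nor the quadratic variation computation is affected.
\end{itemize}

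Third, set $Y^{\mathcal{A}}_\tau=S^{\mathcal{A}}_\tau/(B_\tau e^\tau)$. Under the MMM, Theorem \ref{informationtheorem} gives $a^k=1$, so every activity time satisfies $\tau^k_\tau=\tau$. The identity $Y^{\mathcal{A}}=\sum_{k\in\mathcal{A}} Y^k$ then follows directly from \eqref{Akt2}. Since $B_\tau e^\tau=\exp\{(\hat\lambda+1)(\tau-\tau_{t_0})+\tau_{t_0}\}$ is deterministic and of finite variation, the It\^o formula gives
\begin{equation*}
dY^{\mathcal{A}}_\tau=\frac{dS^{\mathcal{A}}_\tau}{B_\tau e^\tau}-(\hat\lambda+1)Y^{\mathcal{A}}_\tau d\tau .
\end{equation*}
The $\hat\lambda$ terms cancel and the diffusion becomes $\sqrt{4Y^{\mathcal{A}}_\tau}\,dW^{\mathcal{A}}_\tau$, which yields \eqref{Ykappa5}.

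Finally, \eqref{Ykappa5} is a square-root SDE, so Yamada--Watanabe gives a unique strong solution, identifying $Y^{\mathcal{A}}$ as a CIR process of dimension $d_{\mathcal{A}}$. The initial value follows by summing the initial values in Corollary \ref{corolvarphiAk}.
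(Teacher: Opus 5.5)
Your proposal is correct and takes the same route as the paper. The paper only says the result follows "by applying Corollary \ref{corolvarphiAk}", pointing to the additivity of squared Bessel processes in \citeN{ShigaWa73}. Your summation of the factor SDEs, the L\'evy-characterization argument for $W^{\mathcal{A}}$, and the It\^o normalization by $B_\tau e^{\tau}$ supply exactly the details it leaves implicit.

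One small remark: your case $d_{\mathcal{A}}\ge 2$ assumes the sum stays positive, which is really a consequence of the result itself. It is cleaner to use your auxiliary-Brownian-motion construction on $\{S^{\mathcal{A}}_\tau=0\}$ in every case, where both noise terms vanish anyway, and deduce positivity afterwards.
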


The result demonstrates that factor sums have a unique additive property under the MMM. The sum of all independent factors, called the {\em factor portfolio} (FP)  $S^{FP}_\tau$, is particularly important. Using Corollary \ref{sumofatoms}, the It\^o formula, and Equation \eqref{Nt=1}, we conclude the following: 
\begin{corollary}\label{sumatomAP}
The FP, which is the sum of the factors,
\begin{equation}\label{SAP}
S^{FP}_{\tau}=\sum_{k=1}^{n}  S^k_{\tau}, \end{equation}
follows under the MMM a time-transformed squared Bessel  process of dimension four. The {\em  normalized FP} 
\begin{equation}
Y^{FP}_{\tau^{FP}_\tau}=\sum_{k=1}^{n}Y^k_{\tau^k_\tau}=\frac{S^{FP}_\tau }{B_\tau e^{\tau}}
\end{equation}
represents a CIR process of dimension four in market time with average $4$. 
\end{corollary}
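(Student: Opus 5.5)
Apply Corollary \ref{sumofatoms} with $\mathcal{A}=\{1,\dots,n\}$, then add the conservation law \eqref{Nt=1} and the Itô formula. Under the MMM every activity equals $1$ by Theorem \ref{informationtheorem}, so all activity times coincide with market time. The normalized factors $Y^k_\tau$ are then independent CIR processes of dimension $\frac{4}{n}$, all driven in the same time $\tau$. With the full index set, the dimension in Corollary \ref{sumofatoms} is
\[
d_{\{1,\dots,n\}}=\sum_{k=1}^{n}\frac{4}{n}=4 .
\]
This is where the equal-weighting implied by Theorem \ref{stockGOP} and Theorem \ref{informationtheorem1} enters: the $n$ dimensions $\frac{4}{n}$ add up to exactly four. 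Substituting into \eqref{Ykappa5} gives
\[
dY^{FP}_\tau=(4-Y^{FP}_\tau)\,d\tau+\sqrt{4Y^{FP}_\tau}\,dW^{FP}_\tau ,
\]
where $W^{FP}$ is the Brownian motion built in Corollary \ref{sumofatoms}. This is a CIR process of dimension four in market time, with $\tau^{FP}_\tau=\tau$.

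For the average, linearity of expectation and independence give ${\bf E}(Y^{FP})=\sum_k {\bf E}^{p^k_\infty}(Y^k)=\sum_k 4\omega^k$ by \eqref{omega}. By \eqref{Nt=1} this equals $4$. The same value follows from Theorem \ref{informationtheorem1}, since each gamma density there has mean $\frac{4}{n}$. As a consistency check, the stationary law of the sum is a gamma density with $4$ degrees of freedom, because independent gamma variables with a common scale add their shape parameters.

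It remains to show that $S^{FP}_\tau=Y^{FP}_\tau B_\tau e^\tau$ is a time-transformed squared Bessel process of dimension four. By Theorem \ref{informationtheorem}, $B_\tau=e^{\hat\lambda(\tau-\tau_{t_0})}$, so $B_\tau e^\tau$ is deterministic. From \eqref{Akphi1} with dimension $4$, $S^{FP}$ is an SROU process with constant linear drift coefficient $\hat\lambda$ and time-dependent scale $B_\tau e^\tau$. The standard representation of SROU processes (e.g.\ \citeN{RevuzYo99}, \citeN{GoingYo03}) shows that $X_\tau=e^{-\hat\lambda(\tau-\tau_{t_0})}S^{FP}_\tau$ satisfies
\[
dX_\tau=4e^{\tau+c}\,d\tau+\sqrt{4X_\tau e^{\tau+c}}\,dW^{FP}_\tau
\]
for a suitable constant $c$. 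Define the deterministic clock $\varphi_\tau=\int_{\tau_{t_0}}^{\tau}e^{s+c}\,ds$. Since $\int\sqrt{e^{s+c}}\,dW^{FP}_s$ has quadratic variation $\varphi_\tau$, the Dambis--Dubins--Schwarz theorem turns it into a Brownian motion in $\varphi$-time. In that time $X$ satisfies $dX=4\,d\varphi+2\sqrt{X}\,d\tilde W_\varphi$, which is a squared Bessel process of dimension four. Hence $S^{FP}_\tau=e^{\hat\lambda(\tau-\tau_{t_0})}X_{\varphi_\tau}$ is a time-transformed squared Bessel process of dimension four, in the paper's sense that the SROU process with nonzero $\hat\lambda$ is the exponentially scaled and time-changed version.

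The only point that needs care is bookkeeping in this last step: the constant $c$ must absorb $B_\tau e^\tau$ so that the drift and diffusion scales of $X$ match, as they do in \eqref{Akphi1}. Alternatively, under $Q_{S^*}$ the drift $\hat\lambda$ is removed, as in Corollary \ref{ABNPricingformula}, and the squared Bessel form appears directly.
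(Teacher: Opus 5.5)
Your proposal is correct and follows the paper's route. The paper derives this corollary in one line from Corollary \ref{sumofatoms} with $\mathcal{A}=\{1,\dots,n\}$, the It\^o formula, and \eqref{Nt=1}, and your explicit time change fills in the step the paper leaves implicit. One cosmetic point: no constant is needed, because dividing \eqref{Akphi1} by $B_\tau$ gives exactly $dX_\tau=4e^{\tau}d\tau+\sqrt{4X_\tau e^{\tau}}\,dW^{FP}_\tau$, so $c=0$ and the clock is $\varphi_\tau=e^{\tau}-e^{\tau_{t_0}}$.
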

\subsection{Ilustrations}
For the SROU process $S^{FP}_\tau$ evolving in the market time ${\tau_t}$ one obtains via application of the It\^o formula its activity time via the formula 
\begin{equation}
	{\tau_t}=\ln \left(\left[\sqrt{S^{FP}_{\tau_.}}\right]_{t_0,t}+e^{{\tau_{t_0}}}\right)
\end{equation}
for $t\in[t_0,\infty)$; see \citeN{Platen25a}. The quadratic variation $[\sqrt{S^{FP}_{\tau_.}}]_{t_0,t}$ is observable. The only parameter needed to determine the activity time is the initial activity time ${\tau_{t_0}}$. For stock market indices, the activity time typically evolves in an almost linear manner, making it easy to estimate the initial value through linear regression; see \citeN{Platen25a}. This estimation only needs to be done once, after which the initial activity time becomes a known parameter. Once established, it remains unchanged, allowing for straightforward calculation of the current activity time when pricing contingent claims. By setting the maturity of a contingent claim as a fixed activity time value of the underlying security and finely discretizing calendar time, research by
\begin{figure}
	[h!]
	\centering
	\includegraphics[width=14cm, height=6cm]{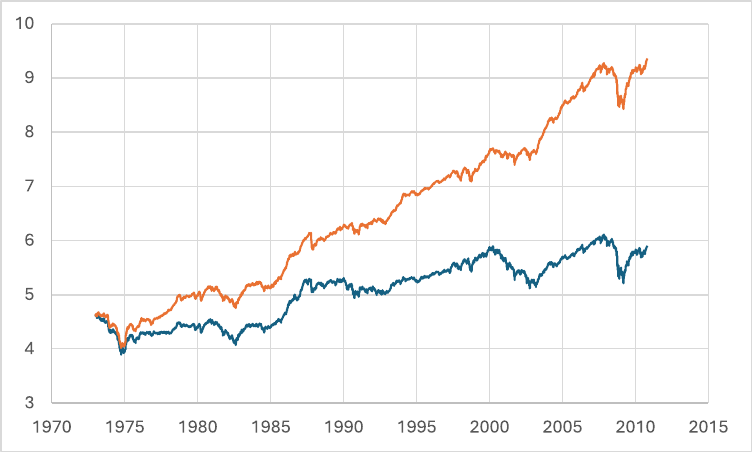}
	\caption{\label{FigMSCIMF1} Logarithms of US  savings account denominated MSCI (blue) and EWI114 (red).}
\end{figure}  
 \citeN{Platen25a} and \citeN{PlatenFe25a} has shown that hedge errors for inexpensive zero-coupon bonds are extremely small. Only noticeable discretization errors arise when activity time moves faster than expected in calendar time, resulting in larger activity time steps. The outstanding hedges achieved are due to the observed activity time capturing all features of the security trajectory not explained by diffusion dynamics modeled in activity time. Most of these features can be accounted for using Assumption {\bf{A1}}, as shown in \citeN{PlatenRe19}.

\begin{figure}
	[h!]
	\centering
	\includegraphics[width=14cm, height=6cm]{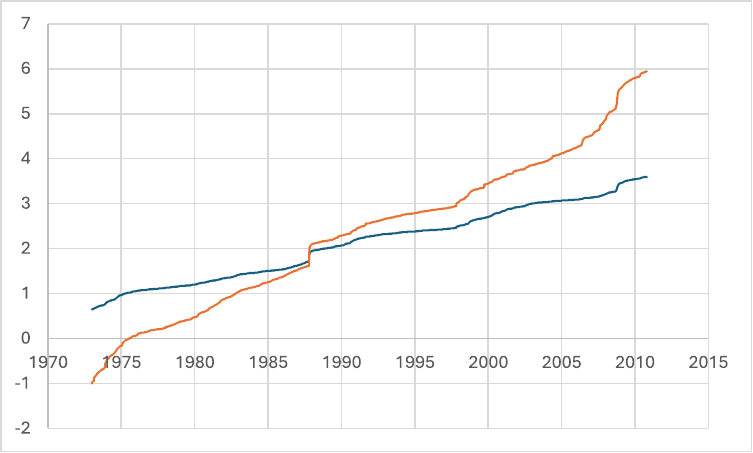}
	\caption{\label{FigMSCIMF2} Market time $\tau_t$ (blue) and benchmark time $\tau^*_{\tau_t}$ (red).}
\end{figure}
For illustration, consider the MSCI World Total Return stock index (MSCI) in US savings account denomination  as proxy for the factor portfolio  $S^{FP}_\tau$ and the EWI114 constructed in \citeN{PlatenRe12an} as proxy for the benchmark $S^*_\tau$. The latter refers to an equally weighted portfolio of 114 global industry subsector indices, which can be considered as factors. Figure \ref{FigMSCIMF1} shows the logarithms of both stock portfolios, revealing their roughly linear progression. Figure \ref{FigMSCIMF2} displays the market time $\tau_t$ and the benchmark time $\tau^*_{\tau_t}$, both following nearly linear paths. As illustrated in \citeN{Platen25a} and \citeN{Platen26a}, for a fixed maturity at the end of the dataset measured in benchmark time, an inexpensive zero-coupon bond, paying out one unit from the savings account, has been priced using BN pricing and hedging, with the EWI114 serving as a proxy. The resulting hedge error, shown in Figure \ref{FigMSCIMF3}, is less than $0.0001$ units of the savings account at maturity, where one unit equals the payout. Notably, hedging inexpensive zero-coupon bonds with very long maturities turns out to be less costly than traditional risk-neutral pricing suggests, an insight highlighted by \citeN{Platen26a} and significant for pension and insurance industries.

\begin{figure}
	[h!]
	\centering
	\includegraphics[width=14cm, height=6cm]{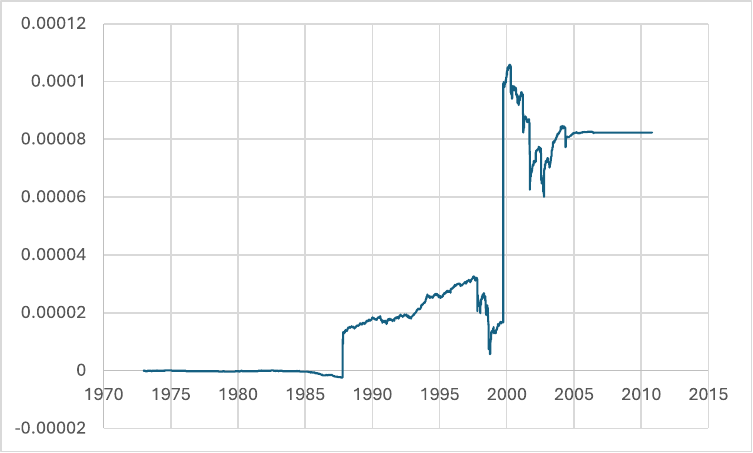}
	\caption{\label{FigMSCIMF3} Hedge error for inexpensive zero-coupon bond.}
\end{figure}

\subsection{Empirical Evidence}
Let us compare predicted properties with empirical data from the real market, as outlined in the introduction: By design, stock market index volatilities are stationary processes; see \citeN{Engle82}. The Student-t density with four degrees of freedom emerges when estimating the log-return density of a security whose squared volatility follows an inverse gamma distribution with four degrees of freedom; see, e.g.,  \citeN{HurstPl97d} or \citeN{PlatenRe08e}. When the MSCI is viewed as the FP, the MMM predicts its log-return density as a Student-t density with four degrees of freedom, which  \citeN{FergussonPl06dc} found to have high significance among numerous competing models. \citeN{BreymannLuPl09e} demonstrated that, across many observation frequencies, the hypothesis that MSCI log-returns follow a Student-t distribution with approximately four degrees of freedom cannot be easily dismissed when valued in major currencies. Similar findings were obtained for   the S\&P500 in  \citeN{MarkowitzUs96a} and  in \citeN{HurstPl97d} for stock indices of various countries.

 Since for an SROU process the volatility  is proportional to the inverse of its square root, the leverage effect, observed by \citeN{Black76a},  is endogeneously built into the MMM dynamics of stock indices and several other securities. The  $3/2$ stochastic volatility model for stock indices of the MMM explains the clustering and heteroscedasticity of observed volatilities. Moreover, the   rough volatility paths observed, e.g., in \citeN{BayerFrGa16}, can be explained, as is shown in \citeN{PlatenRe19}, due to the fact that the market activity evolves in market time but is viewed in calendar time, which generates volatility spikes. The  paper \citeN{PlatenRe19} demonstrates  that the market activity under the MMM is modelling the fast moving component of the stock market index volatility, whereas the inverse of the square root of the normalized stock market index represents its much slower moving component. This aligns with the findings in \citeN{FouquePaSi99}. The logarithm of a stock market index is under the MMM, by construction, approximately linearly evolving in a mean-reverting manner, as visible in Figure \ref{FigMSCIMF1} for the MSCI. As shown in \citeN{PlatenFe25a} and Figure \ref{FigMSCIMF3},  the MMM delivered extremely accurate hedges for thousands of inexpensive extreme maturity bonds.\\
  Under the MMM, benchmark dynamics follow an  SROU  process with dimension four in benchmark time. \citeN{PlatenRe12an} constructed the EWI114 as a proxy for the world stock market benchmark, which shows strong average growth compared to the MSCI, as shown in  Figure \ref{FigMSCIMF1}. Similarly, \citeN{PlatenRe08e} introduced the EWI104; its log-return density is shown to align with a Student-t distribution with four degrees of freedom, consistent with MMM predictions.\\
 \citeN{GoardMa13} found that the 3/2-volatility model of the MMM outperforms other stochastic volatility models when applied to VIX derivatives. \citeN{Mandelbrot01a} observed self-similarity in stock indices, which the MMM also predicts through its modelling of sums of independent factors as squared Bessel processes; see \citeN{RevuzYo99}. Thus, the MMM accounts for key empirical facts and can be considered to represent a \textquoteleft realistic' financial market model.\\

         \section*{Conclusion}
         Based on four assumption, a parsimonious, idealized information-minimized financial market model has been derived. The stock market index and the growth optimal portfolio of the stocks are under this model and under the benchmark-neutral pricing measure   squared Bessel processes that evolve in  respective  activity times.  No parameters are required for the accurate benchmark-neutral pricing and hedging  of a wide range of contingent claims. Only  the  observable market time and the observable benchmark value enter the respective formulas for prices and hedges concerning the benchmark.
          The findings indicate that the financial market can be viewed as a communication system, and  concepts from information theory are particularly pertinent to the field of finance.  Further research  could revisit long standing questions in finance from the perspective of this paper and may lead to interesting answers.

       \appendix

        \textwidth14.5cm \textheight9in
        \topmargin0pt
        \renewcommand{\theequation}{\mbox{A.\arabic{equation}}}
        \renewcommand{\thefigure}{A.\arabic{figure}}
        \renewcommand{\thetable}{A.\arabic{table}}
        \renewcommand{\footnoterule}{\rule{14.8cm}{0.3mm}\vspace{+1.0mm}}
        \renewcommand{\baselinestretch}{1.0}
        \pagestyle{plain}

      \section*{Appendix A: Proof of Theorem \ref{stockGOP}}\setcA \setcB 
     \begin{proof}  
    By applying  Theorem 3.1 in \citeN{FilipovicPl09} for $\tau\in[\tau_{t_0},\infty)$, we obtain from  \eqref{hatB11j21} with the SDE 3.10 	in Theorem 3.1 in \citeN{FilipovicPl09}   
    for the portfolio $S^{\pi}_\tau$ with weight vector $\pi_\tau=(\pi^1_\tau,...,\pi^n_\tau)^\top$ for the holdings in the factors ${\bf{S}}_\tau$ the SDE
    \BE \label{e.4.12}
    \frac{dS^{\pi}_\tau}{S^{\pi}_\tau}=\pi_\tau^\top\frac{d{\bf{S}}_\tau}{{\bf{S}}_\tau}= \hat\lambda_\tau d \tau +
    \sum_{k=1}^{n}\pi^{k}_\tau\beta^k_\tau(\beta^k_\tau \omega^k d \tau+d{ W}^k_\tau). \EE
    By the matrix equation (3.5)   in Theorem 3.1 in \citeN{FilipovicPl09}   it follows  for the $k$-th factor-GOP weight
    \begin{equation}
    	(\beta^k_\tau)^2\pi^{*,k}_\tau+\hat\lambda_\tau= \hat\lambda_\tau+	(\beta^k_\tau)^2\omega^k,
    \end{equation}
    which yields
    \begin{equation}\label{pik}
    	\pi^{*,k}_\tau=\omega^k
    \end{equation}
    for $k\in\{1,...,n\}$ and $\tau\in[\tau_{t_0},\infty)$.   	By 
    applying Equation (3.8) in   \citeN{FilipovicPl09}  we obtain the $k$-th benchmark volatility
    \begin{equation}\label{theta4}
    	\pi^{*,k}_\tau\beta^k_\tau=\omega^k\beta^k_\tau
    \end{equation}
    for  $k\in\{1,...,n\}$ and $\tau\in[\tau_{t_0},\infty)$.  Furthermore, we obtain by Equation (3.4)   in Theorem 3.1 in \citeN{FilipovicPl09} and \eqref{pik} the sum
    \begin{equation}
    	\sum_{k=1}^{n}\omega^k=\sum_{k=1}^{n}\pi^{*,k}_\tau=1,
    \end{equation}
       which proves Equation \eqref{Nt=1}. By \eqref{pik} one obtains the Equation \eqref{pi1} and the SDE \eqref{e.4.122a}, which completes  the proof of Theorem \ref{stockGOP}.\\
    \end{proof}    
    
       	\textwidth14.5cm \textheight9in
              	\topmargin0pt
       	\renewcommand{\theequation}{\mbox{B.\arabic{equation}}}
       	\renewcommand{\thefigure}{B.\arabic{figure}}
       	\renewcommand{\thetable}{B.\arabic{table}}
       	\renewcommand{\footnoterule}{\rule{14.8cm}{0.3mm}\vspace{+1.0mm}}
       	\renewcommand{\baselinestretch}{1.0}
       	\pagestyle{plain}
       	 \section*{Appendix B: Proof of Theorem \ref{GOPentire}}\setcA \setcB
       	
       \begin{proof} 
       	Since the savings account $S^0_\tau\equiv1$ is a traded security in the extended market, it follows from Theorem 3.1 in \citeN{FilipovicPl09} that the  risk-adjusted return of this market is zero. For a savings account denominated portfolio $S^{\pi}_\tau$, which invests with the weight $ \pi^{0}_\tau$ in the savings account $S^0_\tau$ and with the weight $ \pi^k_\tau$ in the $k$-th factor $S^k_\tau$, $k\in\{1,...,n\}$, it follows by  Equation (3.8) and Equation (3.11) in Theorem 3.1 in \citeN{FilipovicPl09}   that this portfolio equals the NP $S^{**}_\tau$  when it satisfies the SDE
       	\begin{equation}
       		\frac{dS^{**}_\tau}{S^{**}_\tau}=\sum_{k=1}^{n}\theta^{**,k}_\tau\left(\theta^{**,k}_\tau d \tau+dW^k_\tau\right)
       	\end{equation}
       	with  \begin{equation}
       		\theta^{**,k}_\tau=\pi^{**,k}_\tau\beta^k_\tau,
       	\end{equation}
       	which proves  Equation  \eqref{bare.4.113}.  By Equation (3.8)  in  Theorem 3.1 in \citeN{FilipovicPl09} we have the  $k$-th market price of risk
       	and by the  matrix equation (3.5)  in  Theorem 3.1 in \citeN{FilipovicPl09}    the equation
       	\begin{equation}
       		(\beta^k_\tau)^2\pi^{**,k}_\tau= \hat\lambda_t+	(\beta^k_\tau)^2\omega^k,
       	\end{equation} which is solved by the optimal $k$-th NP weight
       	\begin{equation}\label{barpi**}
       		\pi^{**,k}_\tau=\frac{ \hat\lambda_\tau}{	(\beta^k_\tau)^2}+\omega^k
       	\end{equation}
       	and yields the $k$-th market price of risk
       	\begin{equation}\label{thetak1}
       		\theta^{**,k}_\tau=\frac{ \hat\lambda_\tau}{	\beta^k_\tau}+\omega^k\beta^k_\tau
       	\end{equation}
       	for $k\in\{1,...,n\}$ and $\tau\in[\tau_{t_0},\infty)$.	 We obtain by \eqref{barpi**} and \eqref{Nt=1} %\eqref{Mt1}
       	the NP weight
       	\begin{equation}
       		\pi^{**,0}_\tau=1-\sum_{k=1}^{n}\pi^{**,k}_\tau=1- \hat\lambda_\tau\sum_{k=1}^{n}(\beta^k_\tau)^{-2}-1= - \hat\lambda^{}_\tau\sum_{k=1}^{n}(\beta^k_\tau)^{-2}
       	\end{equation}
       	to be invested  in the savings account $S^0_\tau$. 
       	This completes the proof of Theorem \ref{GOPentire}.
       \end{proof}

       \textwidth14.5cm \textheight9in
             \topmargin0pt
       \renewcommand{\theequation}{\mbox{C.\arabic{equation}}}
       \renewcommand{\thefigure}{C.\arabic{figure}}
       \renewcommand{\thetable}{C.\arabic{table}}
       \renewcommand{\footnoterule}{\rule{14.8cm}{0.3mm}\vspace{+1.0mm}}
       \renewcommand{\baselinestretch}{1.0}
       \pagestyle{plain}
       \section*{Appendix C: Proof of Theorem \ref{informationtheorem1}}\setcA \setcB
       \begin{proof} 
       	We perform the minimization of the  surprisal
       	\begin{equation}
       		\mathcal{I}(p_{\infty})=	\sum_{k=1}^{n}\mathcal{I}(p^k_{\infty}),
       	\end{equation} of the stationary joint density $p_\infty$ in several steps.\\
       	
       \noindent 1. At first, we derive the stationary densities of the normalized factors.	For $k\in\{1,...,n\}$, the stationary density $p^k_{\infty}$   of the $k$-th normalized factor, when it is evolving in the  market time $\tau$,
       	is by the SDE \eqref{bardYtau} the  solution of the  stationary Fokker-Planck equation 
       	\begin{equation}
       		\frac{d}{dy}	 \left(p^k_\infty(y)y((\phi^k(y))^{-1}4\omega^k-1)a^k\right) -\frac{1}{2}\frac{d^2}{dy^2} \left(p^k_\infty(y)y^2\frac{4a^k}{\phi^k(y)}\right) =0,
       	\end{equation} as described, e.g.,in  Chapter 4 in \citeN{PlatenHe06}, which is a  second-order ODE. Its solution is given by the formula 
       	\BE \label{qjy}
       	p^{k}_\infty(y)=\frac{C_k\phi^k(y)}{4y^2 }\exp \left\{2\int_{1}^{y}\frac{\omega^k-\frac{\phi^k(u)}{4}}{u}du\right\}
       	\EE
       	for $ y\in(0,\infty)$ and some constant $C_k>0$. The latter ensures that $p^k_\infty$ is a probability density. \\
       	
       \noindent 2.	Under the  constraints \eqref{omega} and \eqref{barstalogmean*},   we minimize according to \eqref{Ipq} the surprisal \begin{equation}\mathcal{I}(p^k_\infty)=-\int_{0}^{\infty}\ln(p^k_\infty(y))p^k_{\infty}(y)dy\end{equation}    of the stationary probability density $p^{k}_\infty$ of the $k$-th independent normalized factor $Y^k_{\tau^k_.} $, $k\in\{1,...,n\}$. According to the formula \eqref{Ipq}, we minimize the  Lagrangian
       	\begin{equation*}
       		{\cal{L}}(p^{k}_\infty, \lambda_0,\lambda_1,\lambda_2)= 
       		-\int_{0}^{\infty} \ln(p^k_\infty (y))p^k_\infty (y)dy+\lambda_0 \left(\int_{0}^{\infty}p^{k}_\infty(y)dy-1\right)\end{equation*}\BE +\lambda_1\left(\int_{0}^{\infty}y p^{k}_\infty(y)dy-4\omega^k\right)
       	+\lambda_2\left(\int_{0}^{\infty}\ln(y)p^{k}_\infty(y)dy-\zeta^k\right),
       	\EE
       	where $\lambda_0, \lambda_1, \lambda_2$ are Lagrange multipliers. $ {\cal{L}}(p^{k}_\infty, \lambda_0,\lambda_1,\lambda_2)$ is minimized when its Fr\'echet derivative $\delta {\cal{L}}(p^{k}_\infty, \lambda_0,\lambda_1,\lambda_2)$, i.e., the first variation of ${\cal{L}}(p^{k}_\infty, \lambda_0,\lambda_1,\lambda_2)$ with respect to admissible variations of $p^{k}_\infty$, becomes zero. This implies for the surprisal-minimized stationary density $ p^{k}_\infty$ the equation
       	\BE
       	\delta {\cal{L}}( p^{k}_\infty, \lambda_0,\lambda_1,\lambda_2)=\int_{0}^{\infty}\left(-\ln( p^{k}_\infty(y))+\lambda_0+\lambda_1 y+\lambda_2 \ln(y)\right)\delta  p^{k}_\infty(y) dy=0.
       	\EE
       	The solution of the above first-order condition is the gamma density 
       	\BE \label{barpy}
       	p^{k}_\infty(y)=\exp \{\lambda_0+ \lambda_1 y+\lambda_2 \ln(y)\}
       	\EE
       	for $y\in (0,\infty)$ with the constraint \begin{equation}
       		\int_{0}^{\infty}\exp \{\lambda_0+ \lambda_1 y+\lambda_2 \ln(y)\}dy=1,
       	\end{equation}
       	and the Lagrange multipliers $\lambda_0, \lambda_1,\lambda_2$. It
       	has  $2(\lambda_2+1)$ degrees of freedom and  as parameters and constraints the averages \BE {\bf{E}}^{ p^{k}_\infty}(Y^k_{.})=\frac{\lambda_2+1}{-\lambda_1}=4\omega^k\EE %and 
       	and
       	$$ {\bf{E}}^{ p^{k}_\infty}(\ln(Y^k_{.}))=\zeta^k.$$
       	\noindent 3. The SDE for the   $k$-th normalized factor  is given  by \eqref{bardYtau}.  
       	Consequently, the stationary density $p^{k}_\infty(y)$ of the $k$-th normalized factor satisfies the Fokker-Planck equation  with the drift and diffusion coefficient functions of the SDE \eqref{bardYtau}. This yields  the stationary density $p^{k}_\infty(y)$ in the form  \eqref{qjy}. The latter must equal the above-identified gamma density. By setting  both expressions for the stationary density equal, respective conditions for the  function $\phi^k(y)$ emerge.\\
           	The Weierstrass Approximation Theorem states that a continuous function can be approximated on a bounded interval by a polynomial.  When using a polynomial for characterizing $\phi^k(y)$ and searching for a match of the stationary density \eqref{qjy} with the gamma density  \eqref{barpy}, then one finds by comparing the coefficients of the possible polynomials  that only the polynomial
       	\BE\label{psi1}
       	\phi^k(y)=y 
       	\EE
       	provides such a match. 
       	This yields for the $k$-th normalized factor process $Y^k_.$ the 
       	stationary density
       	\BE\label{qY1}
       	p^{k}_\infty(y)=\frac{C_k y }{4y^2 }\exp \left\{2\int_{1}^{y}\frac{\omega^k-\frac{u}{4} }{u}du\right\}=\frac{y^{2\omega^k-1}}{2^{2\omega^k}\Gamma(2\omega^k)} \exp\left\{-\frac{y}{2}\right\}
       	\EE
       	for $y>0$. The above density  is the gamma density with $d_k=4\omega^k$ degrees of freedom and mean $4\omega^k$. We assumed
       	the logarithmic average of the stationary density to equal   a constant $\zeta^k$, which emerges as
       	\BE\label{Hq'}
       	\zeta^k=  {\bf{E}}^{ p^{k}_\infty}(\ln(Y^k_{.}))=\ln\left(2\right)+  \psi(2\omega^k),
       	\EE
       	where the function $\psi(x)$ 
       	is the diagamma function;  see, e.g., \citeN{AbramowitzSt72} or \citeN{JohnsonKoBa95}.\\ 
       	\noindent 4.	The resulting CIR process is a stationary process where its initial value $Y^k_{\tau_{t_0}}=S^k_{\tau_{t_0}}$ is distributed according to its stationary density. 
       This yields with
       \begin{equation}\sqrt{\mathcal{I}}=\sum_{k=1}^{n}\frac{1}{n}\sqrt{\mathcal{I}(p^k_\infty)}
       \end{equation}
       the surprisal of the stationary joint density
       \begin{equation*}\mathcal{I}(p_\infty)=n\sum_{k=1}^{n}\frac{1}{n}\mathcal{I}(p^k_\infty)=n\sum_{k=1}^{n}\frac{1}{n}\left(\sqrt{\mathcal{I}(p^k_\infty)}\right)^2\end{equation*}\begin{equation}=n\left((\sqrt{\mathcal{I}})^2+\sum_{k=1}^{n}\frac{1}{n}\left(\sqrt{\mathcal{I}(p^k_\infty)}-\sqrt{\mathcal{I}}\right)^2\right).
       \end{equation}
       The latter is minimized for 
       \begin{equation}
       	\sqrt{\mathcal{I}(p^k_\infty)}=\sqrt{\mathcal{I}},
       \end{equation} 
       which means by \eqref{Nt=1} that
       \begin{equation}
       	\omega^k=\frac{1}{n}.
       \end{equation} 
       According to \citeN{GoingYo03}, the $k$-th normalized factor is a CIR process and, more generally, an SROU process that represents a unique strong solution of the SDE \eqref{dY21}, which completes   the proof of Theorem \ref{informationtheorem1}.
       \end{proof}
              
       	\textwidth14.5cm \textheight9in
       	       	\topmargin0pt
       	\renewcommand{\theequation}{\mbox{D.\arabic{equation}}}
       	\renewcommand{\thefigure}{D.\arabic{figure}}
       	\renewcommand{\thetable}{D.\arabic{table}}
       	\renewcommand{\footnoterule}{\rule{14.8cm}{0.3mm}\vspace{+1.0mm}}
       	\renewcommand{\baselinestretch}{1.0}
       	\pagestyle{plain}
       	\section*{Appendix D: Proof of Theorem \ref{stockGOP11}}\setcA \setcB
       	\begin{proof} \\
       	\noindent	1. Let us consider a market time $\tau\in\mathcal{T}$ when no normalized factor equals zero. By employing Theorem \ref{stockGOP} with \eqref{e.4.122a} and \eqref{beta}, 
       		it follows  for the benchmark the SDE
       		       	\begin{equation}\label{e.4.1222}
       		\frac{dS^{*}_\tau}{S^{*}_\tau}
       		= \hat\lambda_\tau d\tau +
       		\sum_{k=1}^{n}\frac{1}{n}\sqrt{\frac{4 a^k }{Y^k_{\tau^k_\tau}}}\left(\frac{1}{n}\sqrt{\frac{ 4a^k}{Y^k_{\tau^k_\tau}}} d\tau+d{ W}^k_\tau\right). \end{equation}
       	       	By using \eqref{Zt}, one can rewrite this SDE  in the form
       	\begin{equation} \label{e.4.119}
       		d  S^*_\tau= \hat\lambda_\tau S^*_\tau d\tau+
       		Z_\tau S^*_\tau d\tau
       			+		 \sqrt{Z_\tau}S^*_\tau d{ W}^*_\tau,
       		       		\end{equation}
       		where   the Brownian motion $W^*_\tau$ has the stochastic differential given in \eqref{W*t}
       		with initial value $W^*_{\tau_{t_0}}=0$. This  proves the SDE \eqref{e.4.1181} together with \eqref{Zt} and 
       		\eqref{W*t}.\\
       		2. Similarly as in \eqref{Yktayt1}, one can introduce the normalized benchmark $Y^*_{\tau^*_\tau}$ via the formula \eqref{Y*1}.
       		By application of the It\^o formula,  Equation \eqref{Y*1}, and \eqref{Zt} it follows that $Y^*_{\tau^*_\tau}$  satisfies the SDE
       		\begin{equation*}
       			\frac{d Y^*_{\tau^*_\tau}}{Y^*_{\tau^*_\tau}}=\hat\lambda_\tau d\tau+\left(Z_\tau-a^*_\tau\right)d\tau+\sqrt{Z_\tau}dW^*_\tau\end{equation*}
       		\begin{equation}=\hat\lambda_\tau d\tau+\left(\frac{4}{Y^*_{\tau^*_\tau}}-1\right)a^*_\tau d\tau+\sqrt{\frac{4a^*_\tau}{Y^*_{\tau^*_\tau}}}dW^*_\tau
       		\end{equation}
       		and	is forming an SROU  process of dimension four  satisfying the SDE \eqref{Y^*_t} that
       		evolves in the benchmark time 
       		\begin{equation}
       			\tau^*_\tau=\tau^*_{\tau_{t_0}}+\int_{\tau_{t_0}}^{\tau}a^*_sds
       		\end{equation}
       		with the  benchmark activity           
       		\begin{equation}
       			a^*_\tau=\frac{1}{4}Z_\tau Y^*_{\tau^*_\tau}.
       		\end{equation} 
       	
       	This proves the remaining statements of Theorem \ref{stockGOP11}.
       	\end{proof}\\

       	 \textwidth14.5cm \textheight9in
       	        	 \topmargin0pt
       	 \renewcommand{\theequation}{\mbox{E.\arabic{equation}}}
       	 \renewcommand{\thefigure}{E.\arabic{figure}}
       	 \renewcommand{\thetable}{E.\arabic{table}}
       	 \renewcommand{\footnoterule}{\rule{14.8cm}{0.3mm}\vspace{+1.0mm}}
       	 \renewcommand{\baselinestretch}{1.0}
       	 \pagestyle{plain}
       	 \section*{Appendix E: Proof of Theorem \ref{informationtheorem}}\setcA \setcB

       	\begin{proof} 
       	For   a surprisal-minimized  market one can write the Radon-Nikodym derivative $\Lambda_{Q_{S^*}}(\tau)$, given in \eqref{lambda1}, of the BN pricing measure $Q_{S^*}$   as the product
       	\begin{equation}\label{lambda}
       		\Lambda_{Q_{S^*}}(\tau)=\prod_{k=1}^{n}\Lambda_{Q_{S^*}}^k(\tau),
       	\end{equation}
       	of the independent  martingales
       	\begin{equation}\label{LambdaQS*a}
       		\Lambda_{Q_{S^*}}^k(\tau)	
       		=\exp\left\{-\frac{1}{2}\int_{\tau_{t_0}}^{\tau}\frac{\hat\lambda^2Y^k_{\tau^k_s}}{4a^k}ds-\int_{\tau_{t_0}}^{\tau}\sqrt{\frac{\hat\lambda^2Y^k_{\tau^k_s}}{4a^k}}dW^k_s\right\}
       	\end{equation}
       	for $k\in\{1,...,n\}$ and $\tau \in [\tau_{t_0},\infty)$. 	According to Equation \eqref{Ipq1},  \eqref{LambdaQS*a}, and \eqref{omega} we have to minimize the Kullback-Leibler divergence  
       		\begin{equation}I( p_{\infty},q^{}_\infty)=\frac{1}{2}	\sum_{k=1}^{n}{\bf{E}}^{p_\infty}\left(	\frac{\hat\lambda_\tau^2Y^k_{\tau^k_\tau}}{4a^k}\right)=\frac{1}{2n}	\sum_{k=1}^{n}	\frac{\hat\lambda_\tau^2}{a^k}
       			\rightarrow \min\end{equation}	
       		of $q_\infty$ from $p^{}_\infty$.       		
       		In \eqref{averactivity},  the average of the inverse of the square root of the activity in market time is set to
       		\begin{equation}\label{averactivity1}
       			\sum_{k=1}^{n}\frac{1}{n}\sqrt{\frac{1}{a^k}}=1.
       		\end{equation}
       		       			This allows us to write the Kullback-Leibler divergence in the form
       			\begin{equation*}
       				I( p_{\infty},q^{}_\infty)= 	\frac{\hat\lambda_\tau^2}{2n}\sum_{k=1}^{n}\frac{1}{a^k}= 	\frac{\hat\lambda_\tau^2}{2}\left(\left(\sum_{k=1}^{n}\frac{1}{n}\sqrt{\frac{1}{a^k}}\right)^2+\sum_{k=1}^{n}\frac{1}{n}\left(\sqrt{\frac{1}{a^k}}-1\right)^2\right)
       			\end{equation*} 
       			\begin{equation*}
       				= \frac{\hat\lambda_\tau^2}{2}\left(1+\sum_{k=1}^{n}\frac{1}{n}\left(\sqrt{\frac{1}{a^k}}-1\right)^2 \right).
       			\end{equation*}
       			This expression is minimized with respect to the  activities if all activities are equal with value
       			\begin{equation}
       				a^k=1,
       			\end{equation} which proves \eqref{act} and yields
       			\begin{equation}\label{GSS1}
       				I( p_{\infty},q^{}_\infty)
       				= \frac{\hat\lambda_\tau^2}{2}
       			\end{equation}
       			for $\tau\in[\tau_{t_0},\infty)$.	\\
       			Since the stationary probability densities $p_\infty$ and $q_\infty$ do not change over market time,
       			we obtain the net risk-adjusted return
       			\begin{equation}
       				\hat\lambda_\tau=\sqrt{2I( p_{\infty},q^{}_\infty)}=\hat\lambda
       			\end{equation}
       			as a constant. 
       			This  proves \eqref{infoI} and, therefore, Theorem \ref{informationtheorem}.    
       		\end{proof}

\bibliographystyle{chicago}
\bibliography{my}

\newpage

 \end{document}